\newtheorem{proposition}{Proposition}
\newcommand{\MMD}{\mathrm{MMD}}
  \providecommand\BibTeX{{%
    \normalfont B\kern-0.5em{\scshape i\kern-0.25em b}\kern-0.8em\TeX}}}
\begin{document}

\title{Bandwidth-Optimal Random Shuffling for GPUs}

\author{Rory Mitchell}
\authornote{Both authors contributed equally to this research.}
\email{ramitchellnz@gmail.com}
\orcid{0000-0003-2892-1082}
\affiliation{%
  \institution{Nvidia}
  \streetaddress{2788 San Tomas Expressway}
  \city{Santa Clara}
  \state{CA}
  \country{USA}
  \postcode{95051}
}
\affiliation{%
  \institution{Waikato University}
  \streetaddress{Hillcrest Road}
  \city{Hamilton}
  \country{New Zealand}
  \postcode{3240}
}

\author{Daniel Stokes}
\authornotemark[1]
\affiliation{%
   \institution{nLIGHT, Inc}
  \streetaddress{4637 NW 18th Avenue}
  \city{Camas}
  \state{WA}
  \country{USA}
  \postcode{98607}
}
\affiliation{%
  \institution{Waikato University}
  \streetaddress{Hillcrest Road}
  \city{Hamilton}
  \country{New Zealand}
  \postcode{3240}
}

\author{Eibe Frank}
\author{Geoffrey Holmes}
\affiliation{%
  \institution{Waikato University}
  \streetaddress{Hillcrest Road}
  \city{Hamilton}
  \country{New Zealand}
  \postcode{3240}
}

\renewcommand{\shortauthors}{Mitchell and Stokes, et al.}

\begin{abstract}
Linear-time algorithms that are traditionally used to shuffle data on CPUs, such as the method of Fisher-Yates, are not well suited to implementation on GPUs due to inherent sequential dependencies, and existing parallel shuffling algorithms are unsuitable for GPU architectures because they incur a large number of read/write operations to high latency global memory. To address this, we provide a method of generating pseudo-random permutations in parallel by fusing suitable pseudo-random bijective functions with stream compaction operations. Our algorithm, termed `bijective shuffle' trades increased per-thread arithmetic operations for reduced global memory transactions. It is work-efficient, deterministic, and only requires a single global memory read and write per shuffle input, thus maximising use of global memory bandwidth. To empirically demonstrate the correctness of the algorithm, we develop a statistical test for the quality of pseudo-random permutations based on kernel space embeddings. Experimental results show that the bijective shuffle algorithm outperforms competing algorithms on GPUs, showing improvements of between one and two orders of magnitude and approaching peak device bandwidth.
\end{abstract}

\begin{CCSXML}
<ccs2012>
   <concept>
       <concept_id>10010147.10010169.10010170.10010174</concept_id>
       <concept_desc>Computing methodologies~Massively parallel algorithms</concept_desc>
       <concept_significance>500</concept_significance>
       </concept>
 </ccs2012>
\end{CCSXML}

\ccsdesc[500]{Computing methodologies~Massively parallel algorithms}

\keywords{shuffling, GPU}

\maketitle

\section{Introduction}
\label{sec:sample1}
Shuffling is a fundamental computer science problem, the objective of which is to rearrange a set of input elements into some pseudo-random order. The classical method of Fisher-Yates~\cite{fisher1943statistical} (popularised by Knuth in~\cite{Knuth:1997:ACP:270146}) randomly removes elements from an input buffer, one at a time, appending them to the output buffer. This algorithm runs optimally in $O(n)$ time, outputs uniformly distributed permutations, has a simple in-place variant, and is straightforward to implement. As such, sequential shuffling has long been considered a solved problem. However, modern computing hardware such as GPUs offer massively parallel computation that cannot be effectively exploited by the standard sequential shuffling approach due to dependencies inherent in Fisher-Yates type algorithms. The work presented in this paper was motivated by a gap in GPU-oriented parallel primitive libraries, such as Thrust~\cite{bell2012thrust}, Cub~\cite{merrill2015cub}, and Boost.Compute~\cite{boost}, which aim to implement the C++ standard library in equivalent form for GPUs, that prior to the method presented in this paper all lacked shuffling algorithms. The primary contributions of this paper are:
\begin{itemize}
  \item `Bijective shuffle': A new shuffling algorithm, carefully optimised for GPUs.
  \item A novel statistical test for shuffling algorithms, based on kernel space embeddings.
\end{itemize}

Some applications motivating this work are: permutation Monte Carlo tests~\cite{good2006permutation}, bootstrap sampling~\cite{mitchell2017accelerating}, approximation of Shapley values for feature attribution~\cite{mitchell2021sampling}, differentially private variational inference~\cite{prediger2021d3p}, and shuffle operators in GPU accelerated dataframe libraries~\cite{rapids}. More generally, whenever shuffling forms part of a high-performance GPU-based algorithm, performance is lost when data is copied back to the host system, shuffled using the CPU, and copied back to the GPU. Host-to-device copies are limited by PCIE bandwidth, which is typically an order of magnitude or more smaller than available device bandwidth~\cite{farber2011cuda}.

Informally, an ideal parallel algorithm would be capable of assigning elements of the input sequence randomly to unique output locations without contention, communicate minimally between processors, evenly distribute work between processors, use minimal working space, and have deterministic run-time. To achieve this goal, we utilise pseudo-random bijective functions, defining mappings between permutations. These bijective functions allow independent processors to write permuted elements to an output buffer, in parallel, without collision. Critically, we allow bijective functions defining permutations of larger sets to be applied to smaller shuffling inputs via a parallel compaction operation that preserves the pseudo-random property. The result is an $O(n)$ work algorithm for generating uniformly random permutations. We perform thorough experiments to validate the empirical performance and correctness of our algorithm. To this end, we also develop a novel test on distributions of permutations via unique kernel space embeddings.

We begin in Section \ref{sec:background} by introducing notation, summarising existing parallel shuffling algorithms, and describing GPUs. Section \ref{sec:bijective} introduces the idea of bijective functions, explaining their connection to shuffling and proposing two candidate bijective functions. Section \ref{sec:compaction} explains how compaction operations can be combined with bijective functions to generate bijections for arbitrary length sequences, forming the `bijective shuffle' algorithm. Section \ref{sec:statistical_tests} describes a statistical test comparing the kernel space embedding of a shuffling algorithm's output with its expected value. This test is used to evaluate the quality of the proposed shuffling algorithms and to select parameters. Finally, in Section \ref{sec:evaluation}, we evaluate the runtime and throughput of the proposed GPU shuffling algorithm.

\section{Background}
\label{sec:background}
Before discussing existing work, we briefly describe some notation. The order of elements in the shuffled version of an input sequence can be defined as a permutation. We refer to the symmetric group of permutations of $n$ elements as $\mathfrak{S}_n$. The permutation $\sigma \in \mathfrak{S}_n$ assigns rank $j$ to element $i$ by $\sigma(i)=j$. For example, given the permutation written in one-line notation:
$$\sigma = \begin{pmatrix}
  0 & 3 & 1 & 2
\end{pmatrix}$$
and the list of items 
$$(x_0,x_1,x_2,x_3)$$
the items are reordered such that $x_i$ occupies the $\sigma(i)$ coordinate
$$(x_0, x_2, x_3, x_1).$$

Note that we are using non-traditional zero-based indexing of permutations for notational convenience: this simplifies description of the functions used for our shuffling approach later in this paper.

The product of permutations $\sigma\tau$ refers to the composition operator, where the element $i$ is assigned rank $\sigma(\tau(i))=j$.

Shuffling is a reordering of the elements of an array of length $n$ by a random permutation on the finite symmetric group, $\sigma \in \mathfrak{S}_n$, such that each possible reordering is equally likely, i.e., $p(\sigma)=\frac{1}{n!}, \forall \sigma \in \mathfrak{S}_n$.

\subsection{Existing Work on Parallel Shuffling}
\label{sec:existing}
We analyse the computational complexity of shuffling algorithms in terms of the work-time framework~\cite{jeje1992introduction}, where an algorithm is described in terms of a number of rounds, where $p$ processors perform operations in parallel. GPUs are assumed to be \textit{data parallel} processors such that there is an independent processor for each data element. Algorithms are evaluated based on the notion of work complexity, or the total number of operations performed on all processors. As the Fisher-Yates shuffle runs in optimal $O(n)$ work on a single processor, a parallel algorithm is said to be \textit{work efficient} if it achieves $O(n)$ work complexity.

A parallel divide-and-conquer shuffling algorithm is independently proposed by Rao~\cite{rao} and Sandelius~\cite{sandelius} (we refer to this algorithm as RS). The input is recursively divided into subarrays by selecting a random number $0\cdots k-1$ until each subarray contains one element. The final permutation is obtained via an in-order traversal of all subarrays. Bacher et al.~\cite{mergeshuffle} provide a variant of this algorithm named \textit{MergeShuffle}, taking a bottom-up approach where the input is partitioned into $k$ subarrays that are randomly merged with their neighbours until the entire array is shuffled. The parallel work complexity of these recursive divide-and-conquer approaches is $O(n\log(n))$. The \textit{RS} and \textit{MergeShuffle} algorithms have strong parallels to integer sorting algorithms. In particular, \textit{RS} is equivalent to a most significant digit (MSD) radix sort~\cite{CoinFlippingAnalysis} on a sequence, where each element in the sequence is an infinite length random bit stream. Likewise, \textit{MergeShuffle} has strong parallels to merge sort, differing in the definition of the merge operator for each pass. We provide experiments in Section \ref{sec:evaluation} for divide-and-conquer style shuffling algorithms using GPU sorting.

Anderson~\cite{AndersonFisherYates} proves that each of the swaps in the Fisher-Yates algorithm can be reordered without biasing the generated permutation, assuming parallel processors implement an atomic swap operation and that atomic swaps are serialised fairly. Note that current GPU architectures do not serialise atomic swaps fairly, as no assumptions can be made as to the ordering of operations \cite{cuda}, so such an approach would be unsuitable for GPUs. Shun et al.~\cite{FisherYatesDependance} take a similar approach, proving that the Fisher-Yates algorithm has an execution dependence graph with a depth of $O(\log(n))$ with high probability, allowing parallel execution of the swaps, while generating the same output as the sequential approach.

The literature also has divide-and-conquer shuffling algorithms that focus on distributed environments. Sanders~\cite{Sanders} describes an algorithm for external memory and distributed environments where each item is allocated to a separate processor and permuted locally. A prefix sum is then used to place each element in the final output buffer. Langr et al.~\cite{MPIAlgorithm} provide a concrete extension of Sander's algorithm for use with the MPI library. Gustedt~\cite{Gustedt2003} also expands on this approach, describing a method of constructing a random communication matrix that ensures work is distributed fairly between all processors.

Reif~\cite{Reif} describes an algorithm that assigns each element a random integer key between 1 and $p$, where $p$ represents the number of processors. Elements are then assigned to processors by a sorting operation, where they are shuffled by the sequential Fisher-Yates method. The work complexity of the integer sort can be achieved in $O(\log(n))$ time using $n/\log(n)$ processors using the sort algorithm described by Reif~\cite{Reif}. A limitation of this approach is non-deterministic load balancing. One processor may receive significantly more work than others, limiting the algorithm to the speed of the slowest processor.

Alonso and Schott~\cite{AlonsoSchott} define a custom representation with an associated total ordering that can be `sorted' using a variant of merge-sort to produce a random permutation, falling into the class of divide-and-conquer algorithms. Their method forms a bijection between a so-called `lower-exceeding sequence' of length \textit{n} and a permutation of \textit{n} elements. It can be shown that there are exactly \textit{n!} such sequences of length \textit{n}. Given a uniformly selected random lower-exceeding sequence, the algorithm outputs the corresponding random permutation.

Czumaj et al.~\cite{Czumaj1998} make use of network simulation to define a shuffling algorithm. Their work describes two methods for randomly generating a network mapping input elements to a random output location. The first constructs a network representing a random Fisher-Yates shuffle. The network is processed to produce $n$ distinct keys. After sorting these keys, elements are efficiently mapped to output locations. All pre/post-processing steps require $O(\log(n))$ time using $n/\log(n)$ processors in linear space. The second method proposed by Czumaj et al.~\cite{Czumaj1998}, composes \textit{n}-way `splitters' to construct a network randomly permuting the input array. Each splitter randomly divides the input into two equal sized groups. For each splitter, the first group is ordered before the second in the final output. Each group is recursively split until the size of each group is one. At this stage, the network defines a random permutation, and is traversed to find the output position of each element. The algorithm runs in $O(c\log\log(n))$ time using $n^{1+1/c\log\log(n)}$ processors, for an arbitrary positive constant $c$. Granboulan and Pornin~\cite{BlockCipherCzumaj} use this method to generate a bijective function over an arbitrary input domain using $O(\log(n))$ space and $O(\log(n))$ time, but find the cost of the hyper-geometric random number generator used to construct splitters prohibitive.

`Dart-throwing' algorithms place input items randomly in an array of size $O(n)$ until each item is placed in a unique location. A compaction operation is applied to place items in a dense output array. When a placement collides with an occupied space, the process is retried until the element is placed successfully. Reif and Miller~\cite{DartThrowingTree} and later Reif and Rajasekaran~\cite{DartThrowingPrefix} describe a simple method in which a parallel prefix sum is used to perform the compaction. Matias and Vishkin~\cite{DartThrowingCanonical} describe a method using the canonical cycle representation to perform the compaction step, and Hagerup~\cite{DartThrowingMinPrefix} provides an alternative method for computing the min prefixes used when generating the cycle representation. Dart-throwing approaches achieve parallel work complexity of $O(n)$ in expectation, however, non-determinism makes them unsuitable for many practical applications. In our work, we similarly utilise prefix sum to perform compaction, but show how to obtain permutations in a completely deterministic way, using pseudo-random bijective functions in place of dart-throwing.

Cong and Bader~\cite{AlgoComparison} compare four divide-and-conquer, integer sorting, and dart-throwing algorithms and evaluate their performance relative to a sequential Fisher-Yates shuffle on up to 12 processors. They find that the sorting based approach is substantially worse than the other approaches, with Anderson's~\cite{AndersonFisherYates} approach tending to perform the best.

Closest to our work is the linear congruential generator (LCG) based approach of Andr{\'e}s and P{\'e}rez~\cite{LCGShuffle}, where permutations are generated by an LCG of full period. This approach is limited by the need to find viable LCG parameterisations for arbitrary length inputs, where the subset of available parameters is dramatically smaller than the space of permutations. In this paper we describe how any bijective function can be applied to the process of generating random permutations; in particular, we look at a more flexible use of LCGs and the use of \emph{n}-bit block ciphers with much stronger pseudo-random properties.

The above algorithms all suffer from one or more drawbacks with respect to an ideal GPU implementation. The divide-and-conquer approaches have sub-optimal $O(n\log n)$ work complexity. The Rao-Sandelius~\cite{rao} additionally suffers from load balancing issues, where the partitions generated by the divide-and-conquer process can be unevenly sized. The parallel Fisher-Yates algorithm of \cite{AndersonFisherYates} is non-deterministic and relies on fairly serialised atomics for correctness (GPU atomics are not fairly serialised). The network simulation algorithm of \cite{Czumaj1998} has a complicated implementation difficult to adapt to GPUs. Dart-throwing algorithms are non-deterministic both in terms of shuffle output and runtime. The simple LCG approach of \cite{LCGShuffle} offers no solution for arbitrary length sequences and suffers from poor quality pseudorandom outputs. In subsequent sections, we develop a new approach addressing these issues.

\subsection{Graphics Processing Units}
GPUs are massively parallel processors optimised for throughput, in contrast to conventional CPUs, which optimise for latency. GPUs in use today consist of processing units with single-instruction, multiple-thread (SIMT) lanes that efficiently execute instructions for groups of threads operating in lockstep. In the CUDA programming model, execution units called ``streaming multiprocessors'' (SMs), have 32 SIMT lanes. The corresponding group of 32 threads is called a ``warp''. Warps are generally executed on SMs without order guarantees, enabling latency of global memory loads to be hidden by switching between warps~\cite{cuda}.

Large speed-ups in the domain of GPU computing commonly occur when problems are expressed as a balanced set of vector operations with minimal control flow, and coalesced memory access patterns. Notable examples are matrix multiplication~\cite{matrix_multiplication,hall2003cache,tuning_matrix_multiplication}, image processing~\cite{efficient_dct,fft_gpu}, deep neural networks~\cite{Perry_2014,deep_learning,chetlur2014cudnn}, and sorting~\cite{merge_path,radix_sort}.

Implementation of shuffling algorithms poses a particular challenge for GPUs. The SIMD architecture favours a data-parallel approach, where work is evenly distributed among threads belonging to an execution unit --- if any one thread is slow to complete, the entire execution unit is stalled. This rules out approaches from Section \ref{sec:existing} with uneven work distribution per processor. Dart-throwing algorithms may be implemented for GPUs, using atomic compare and swap operators available in CUDA, although collisions for atomic compare and swap operations imply global synchronisation and are extremely costly. Of the existing approaches, sort-based approaches appear the most promising, given the availability of state-of-the-art GPU sorting. Highly optimised algorithms exist for both merge-sort and LSB radix sort via the Thrust~\cite{bell2012thrust} library. As mentioned in Section \ref{sec:existing}, the problem of shuffling can be expressed as a sort over a list whose elements are infinite length random bit strings. In fact, the keys do not need to be infinite, only long enough to break any ties in comparisons between elements. One baseline GPU algorithm we consider is to generate random integer sort keys of machine word length (assumed to be 64 bits) and perform a key-value sort. GPU sorting algorithms such as merge-sort and LSB radix sort perform several passes ($O(\log n)$ for merge-sort and $O(k)$ for radix sort, with $k$ proportional to the sort key size in bits) scattering elements in memory. These scatter passes are particularly expensive for random keys, as memory writes cannot be coalesced together efficiently, and represent the largest performance bottleneck for this shuffling algorithm.

While the above algorithms are candidates for implementation on GPUs, we may improve shuffle throughput and device utilisation significantly by devising a new algorithm tailored to the architecture. As the shuffle operation must at minimum reorder elements using a gather or scatter operation, we posit that the maximum bandwidth of any random shuffle algorithm implemented for GPUs is that of a random gather or scatter. Figure \ref{fig:gather} demonstrates a parallel random gather operation, where threads read from noncontiguous memory locations in the input buffer and write to contiguous memory locations in the output buffer. A scatter operation is the inverse, where threads read from contiguous memory locations and write to noncontiguous memory locations. Scatter/gather operations for GPUs are discussed in detail in \cite{he2007efficient}. As our experiments show that gather operations have a higher bandwidth than scatter operations, we focus on the former. 
\begin{figure}
    \centering
    \includegraphics[width=0.75\columnwidth, keepaspectratio]{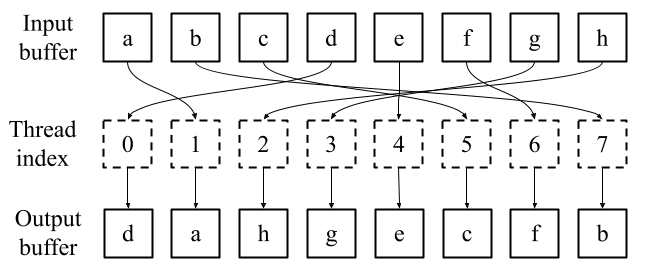}
    \caption{Parallel random gather --- Threads read from non-contiguous memory locations and write to contiguous memory locations.}
    \label{fig:gather}
\end{figure}

Having established an upper bound on GPU bandwidth, we develop an optimised shuffling algorithm operating in a single gather pass, to approach the theoretical peak performance of the device.

\section{Shuffling With Bijective Functions}
\label{sec:bijective}

Our proposed approach is based on applying bijective functions. The symmetric group $\mathfrak{S}_n$, defined over any set of $n$ distinct items, contains all bijections of the set onto itself, or all distinct permutations of $n$ elements. A bijection is a one-to-one map between the elements of two sets, where each element from the first set is uniquely paired with an element from the second set. A bijection between two sets $X,Y$ with $|X|,|Y|=n$ is characterised by some function $f_n: X \rightarrow Y$. The bijective function $f_n$ admits a corresponding inverse $f_n^{-1}: Y \rightarrow X$, which is also a bijection. For any two such functions $(f_n,g_n)$, their composition $f_n \circ g_n$ is also a bijection. As we are dealing with bijections from the symmetric group $\mathfrak{S}_n$, we define functions of the form $f_n: X \rightarrow X$. Without loss of generality, consider the set of nonnegative integers $X=\{0,1,\cdots,n-1\}$. An example of a bijection $\sigma \in \mathfrak{S}_4$ is
\begin{align*}
f_4(0) = 2\\
f_4(1) = 3\\
f_4(2) = 1\\
f_4(3) = 0
\end{align*}

If each of the $n$ functions can be executed on a processor $p_i$, independently of any other processor, without communication and in reasonable time, then this defines a simple parallel algorithm for reordering elements. If $f_n$ furthermore exhibit suitable pseudo-random properties such that $p(\sigma) = \frac{1}{n!}, \forall \sigma \in \mathfrak{S}_n$ (or close enough for practical application), then we have an effective parallel shuffling algorithm. We now discuss potential candidate bijective functions.

\subsection{Linear Congruential}
A first candidate is based on the common linear congruential random number generator (LCG). Given constants $a,c,$ and $n$, the LCG outputs
\begin{equation}
\label{eq:lcg}
y=(a x)+c \text{ mod } n
\end{equation}
If $a$ and $n$ are co-prime, it is well-known that the input $x \in {0,1,\cdots, n-1}$ maps to a unique output location in the integer ring $\mathbb{Z} / n \mathbb{Z}$. Therefore, Equation \ref{eq:lcg} defines a bijective function and may be used to create permutations over inputs of length $n$. For now, assume $n$ is fixed at the length of the input sequence. To find a bijection for a fixed $n$, the task reduces to finding some $a$ co-prime to $n$ ($c$ is unrestricted). Finding $a$ is trivial if $n$ has certain properties, for example, any $a<n$ where $n$ is prime, or odd $a<n$ where $n$ is a power of two. We will later show how to modify the length of the input sequence such that co-prime $a$ is always easily available.

The method described above is simple to implement and computationally inexpensive; however, linear congruential generators are known to have weaknesses as random number generators~\cite{Press2007}. For our use case in particular, assuming some $n$ fixed relative to input length and varying $a$ and $c$ from the above, we can achieve at most $n^2$ unique permutations --- significantly less than the $n!$ possible permutations. Hence, permutations generated by this method may be appropriate for basic applications, but a more robust permutation generator is desirable. 

\subsection{Feistel Network}
\label{sec:feistel}
A second candidate bijective function is a block cipher construction known as a Feistel network~\cite{feistel1973}. A cryptographic block cipher, accepting an encryption key and a $b$-bit plain-text block, provides a bijective mapping to a $b$-bit cipher-text block. A perfect block cipher outputs cipher-text computationally indistinguishable from a random bit string, when the key is unknown. Thus, for a random choice of key, the output should be computationally indistinguishable from a random permutation. The Feistel network construction is a core component of many modern encryption algorithms, such as the Data Encryption Standard (DES) and so represents a significant improvement over the LCG in terms of the quality of its pseudo-random outputs~\cite{Biryukov2005}. 

\begin{figure}
    \centering
    \includegraphics[width=0.75\columnwidth, keepaspectratio]{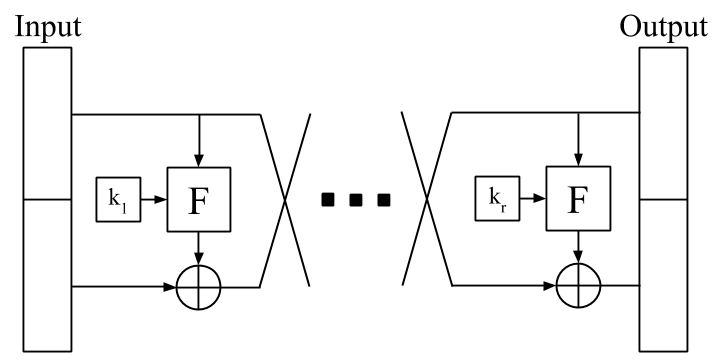}
    \caption{Feistel Network construction --- Successive rounds of the function $F $ are applied to the input, each time with a unique key.}
    \label{fig:FeistelNetwork}
\end{figure}

The construction for a $b$-bit Feistel network consists of multiple applications of a round function as shown in Figure \ref{fig:FeistelNetwork}. In round $i$, the input is split into a $\lfloor b/2\rfloor$-bit binary string $L$, representing the left half of the input block and a $\lceil b/2\rceil$-bit binary string $R$ representing the right half of the input block. A round function $F$ is applied to $R$ and a round key $k_i$. $F$ does not need to be bijective. The first $\lfloor b/2\rfloor$-bit output from $F$ is combined with $L$ using the exclusive-or operation to give the right half for round $i+1$. The right half from round $i$ becomes the left half for round $i+1$. Thus, the $i$th round of the Feistel Network is defined as

\begin{equation}
    f_i(L|R)=R|(L\oplus \ F(R,k_i)),
\end{equation}

with $\oplus$ the bitwise exclusive-or operation. Consequently, a block cipher constructed using a Feistel Network construction with $r$ rounds can be defined as

\begin{equation}
    g(L|R)=f_r\circ\cdots f_2\circ f_1(L|R)
\end{equation}

Luby and Rackoff~\cite{LubyRackoffPRP} proved that a three-round Feistel network is a pseudo-random permutation if the round function $F$ is a pseudo-random function. Therefore, a Feistel Network with $rounds\ge3$ can be used to generate a pseudo-random permutation for any set of size $n=2^b$. 

Feistel networks have been shown to be effective for parallel random number generation. Salmon \textit{et al}.~\cite{salmon} describe two hardware-efficient pseudo-random number generators (PRNGs), named Philox and Threefry, based on simplifications of cryptographic block ciphers. These PRNGs are sometimes called \textit{counter-based}, as generation of the $i$'th random number $x_i$ is stateless, requiring only the index $i$ and a random seed. Advancing the series does not require $x_{i-1}$, and so may be performed trivially by parallel threads. This is in contrast to the canonical Mersenne twister~\cite{mersenne}, which requires a state of 2.5kB and cannot advance the series arbitrarily from $x_i$ to $x_{i+m}$ in constant time with respect to $m$. The Philox and Threefry generators are shown to produce at least $2^{64}$ independent streams of random numbers, with a period of $2^{128}$ or more, and pass BigCrush~\cite{l2007testu01} statistical tests. Figure \ref{fig:philox} shows one step of the Philox PRNG, which differs from the standard Feistel network in the addition of the function $B_k$, which is strictly a bijection. Philox makes use of fast integer multiplication instructions, where multiplication by a carefully chosen, odd constant, yields upper bits forming $F$ and lower bits modulo $2^w$ form the bijection $B$. $B$ is guaranteed to be a bijection because the odd multiplicand is always coprime to the integer ring modulo $2^w$. 

Although these PRNGs happen to have the bijective property, the connection between bijections and the symmetric group $\mathfrak{S}_n$ is not exploited in \cite{salmon}. In this paper, we adapt the Philox cipher to generate bijections with lengths of arbitrary powers of two instead of $2^{64}$, while preserving the invertibility of the process, terming our modification \textit{VariablePhilox}. For example, shuffling a sequence of size $2^7$ results in halves with sizes $|L|=3,|R|=4$ and the bijective property is lost for the standard Philox cipher. Figure \ref{fig:variable_philox} shows a modified construction, where $b$ is the odd bit and $|L|=|R|$. $F_k$ is a pseudo-random key-dependent function, and $B_k$ is a key-dependent bijective function. $G$ is a function mixing bit $b$ into $R'$, where for $G(B_k(L),b)=(R',b')$, $b$ is inserted in the least significant position of $B_k(L)$, yielding $R'$. Then the most significant bit of $R'$ is removed and returned as $b'$. $G$ is clearly invertible such that $G^{-1}(R', b')=(B_k(L),b)$. Defining $\oplus$ as bitwise exclusive-or, \textit{VariablePhilox} encodes inputs as
\begin{align*}
L'= F_k(L) \oplus R \\
(R',b') = G(B_k(L),b)
\end{align*}
This process is inverted to retrieve $(L,R,b)$ by computing (in order):
\begin{align*}
(B_k(L),b)=G^{-1}(R',b') \\
L=B_k^{-1}(B_k(L))\\ 
R=F_k(L) \oplus L'
\end{align*}
Invertibility guarantees the bijective property, so \textit{VariablePhilox} is a bijection for arbitrary power-of-two length sequences. C++ reference code is given in Listing \ref{lst:philox}. 
\begin{figure}
\centering
\begin{minipage}[b]{.366\linewidth}
    \centering
    \includegraphics[width=\textwidth]{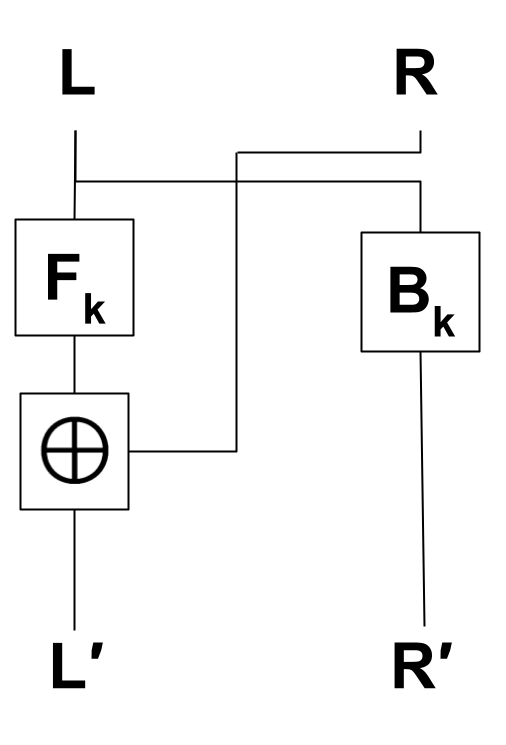}
    \caption{Philox cipher --- Bijective function $B_k$ is added to the standard Feistel cipher}
    \label{fig:philox}
\end{minipage}
\hfill
\begin{minipage}[b]{.534\linewidth}
    \centering
    \includegraphics[width=\textwidth]{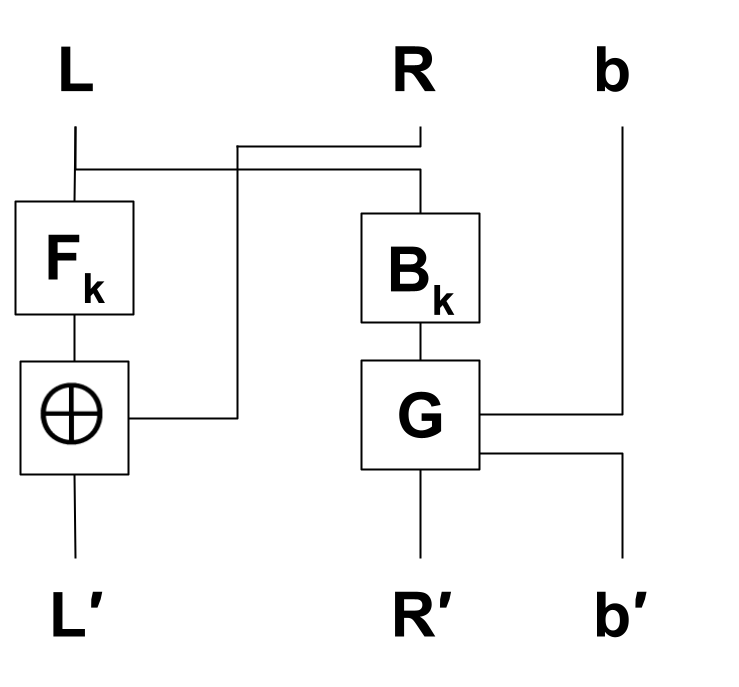}
    \caption{VariablePhilox cipher --- the odd bit $b$ (if it exists) is mixed into $R'$ by function $G$.}
    \label{fig:variable_philox}
\end{minipage}
\end{figure}

We have now described several methods of generating bijections with pseudo-random properties, but these bijections do not apply to arbitrary input lengths. In particular, the Feistel bijective functions are available for sequences of power of two length. We now show how to efficiently extend these to arbitrarily sequences.

\begin{listing}
\begin{minted}
{C++}
uint64_t VariablePhilox(const uint64_t val) const
{
  static const uint64_t M0 = UINT64_C(0xD2B74407B1CE6E93);
  uint32_t state[2] = { uint32_t(val >> right_side_bits),
    uint32_t(val & right_side_mask)
  };
  for (int i = 0; i < num_rounds; i++)
  {
    uint32_t hi;
    uint32_t lo = mulhilo(M0, state[0], hi);
    lo = (lo << (right_side_bits - left_side_bits)) |
      state[1] >> left_side_bits;
    state[0] = ((hi ^ key[i]) ^ state[1]) & left_side_mask;
    state[1] = lo & right_side_mask;
  }
  // Combine the left and right sides together to get result
  return (uint64_t) state[0] << right_side_bits |
    (uint64_t) state[1];
}
\end{minted}
\caption{VariablePhilox implementation --- M0 is a constant selected in \cite{salmon}, mulhilo performs 64 bit integer multiplication, returning the result as the upper and lower 32 bits.}
\label{lst:philox}
\end{listing}

\section{Arbitrary Length Bijections}
\label{sec:compaction}
The key observation for generating bijections of arbitrary length is this: given an input vector $X=[0,1,2,\cdots, m-1]$ of length $m$, there may not be a readily available bijective function $f_m$ of the same length, however, we can find the nearest applicable $f_n$ such that $m \leq n$ and apply the bijective function to the padded vector $\hat{X}=[0,1,2,\cdots, n-1]$. The output is a vector $W$ of length $n$, containing the randomly permuted input indices. By `deleting' all $w \geq m$ from $W$, we obtain a permutation $Y$ of length $m$. Using this process, we form an algorithm generating longer permutations of length $n \geq m$, then compacting them to form permutations of length $m$. This compaction of a longer random permutation is still an unbiased random permutation as per the following proposition.

\begin{proposition}
Define the function $t: \mathfrak{S}_n \rightarrow \mathfrak{S}_m$ removing elements of $\sigma \in \mathfrak{S}_n$ where $\sigma(i) > m$, returning the permutation $\tau \in \mathfrak{S}_m$ of length $m$. If $p(\sigma)=\frac{1}{n!}, \forall \sigma \in \mathfrak{S}_n$, then $p(\tau)=\frac{1}{m!}$.
\end{proposition}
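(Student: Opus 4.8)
The plan is to reduce everything to a single combinatorial fact: the deletion map $t$ is \emph{uniform-to-one}, meaning every target $\tau \in \mathfrak{S}_m$ has a preimage $t^{-1}(\tau) \subseteq \mathfrak{S}_n$ of the same cardinality. Granting this, uniformity of the image distribution is immediate: since the events $\{\sigma\}$ are disjoint,
\[
p(\tau) \;=\; \sum_{\sigma \in t^{-1}(\tau)} p(\sigma) \;=\; \frac{|t^{-1}(\tau)|}{n!},
\]
and because $\{\,t^{-1}(\tau) : \tau \in \mathfrak{S}_m\,\}$ partitions $\mathfrak{S}_n$, summing over the $m!$ possible values of $\tau$ gives $m!\cdot|t^{-1}(\tau)| = n!$, hence $|t^{-1}(\tau)| = n!/m!$ and $p(\tau) = 1/m!$.

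To establish the uniform-to-one property I would exhibit an explicit bijection $t^{-1}(\tau) \to t^{-1}(\tau')$ for any two targets $\tau, \tau' \in \mathfrak{S}_m$. Take $\sigma \in t^{-1}(\tau)$, written in one-line notation, and let $p_0 < p_1 < \cdots < p_{m-1}$ be the positions whose entries lie in $\{0,1,\ldots,m-1\}$ (the entries that survive deletion); after relabelling these positions $0,\ldots,m-1$ in increasing order, $\sigma \in t^{-1}(\tau)$ means precisely that $\sigma(p_j) = \tau(j)$ for each $j$. Define $\sigma'$ by leaving every entry outside these positions untouched, $\sigma'(i) = \sigma(i)$ for $i \notin \{p_0,\ldots,p_{m-1}\}$, and setting $\sigma'(p_j) = \tau'(j)$. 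Then $\sigma'$ merely rearranges the small values $\{0,\ldots,m-1\}$ among the same positions, so it is again a permutation of $\{0,\ldots,n-1\}$, it has exactly the same surviving positions, and $t(\sigma') = \tau'$. Swapping the roles of $\tau$ and $\tau'$ inverts the construction, so $\sigma \mapsto \sigma'$ is the desired bijection, and $|t^{-1}(\tau)|$ does not depend on $\tau$.

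As an alternative that also pins down the constant directly, one can simply count: a $\sigma \in t^{-1}(\tau)$ is determined by choosing which $m$ of the $n$ one-line positions carry the values $\{0,\ldots,m-1\}$ (that is $\binom{n}{m}$ choices), after which those entries are forced to spell out $\tau$, while the remaining $n-m$ positions carry the large values $\{m,\ldots,n-1\}$ in an arbitrary order ($(n-m)!$ choices). Hence $|t^{-1}(\tau)| = \binom{n}{m}(n-m)! = n!/m!$ independently of $\tau$, recovering the same conclusion.

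The mathematics here is light; the one place that needs care is the precise definition of $t$ under the paper's zero-based indexing --- in particular, that the entries deleted are those taking a value in $\{m, m+1, \ldots, n-1\}$ (so that exactly $m$ entries remain and relabelling their positions yields a genuine element of $\mathfrak{S}_m$), and that this relabelling is what makes the set of surviving positions shared, interchangeable data in the bijection above. I would also remark that it is harmless if the intended $t$ instead acts by deleting padded \emph{output} locations (i.e. operates on $\sigma^{-1}$), since inversion is a uniformity-preserving bijection of $\mathfrak{S}_n$, so the argument applies verbatim.
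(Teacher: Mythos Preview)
Your argument is correct and follows the same route as the paper: show that $t$ is uniform-to-one with fibers of size $n!/m!$, so pushing forward the uniform measure on $\mathfrak{S}_n$ yields the uniform measure on $\mathfrak{S}_m$. The paper's own proof merely asserts the fiber-size claim without justification, whereas you supply both an explicit bijection between fibers and a direct count, so your version is strictly more complete.
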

\begin{proof}
The function $t(\sigma)=\tau$ is a surjective map from $\mathfrak{S}_m \rightarrow \mathfrak{S}_n$. For each $\tau \in \mathfrak{S}_n$, there exists a non-overlapping subset $\Pi_{\tau} \subseteq \mathfrak{S}_m$ such that $\forall \pi \in \Pi_{\tau},\: t(\pi)=\tau$. Clearly $|\Pi_{\tau}|=\frac{m!}{n!}, \forall \tau \in \mathfrak{S}_n$ and so, if $p(\sigma)=\frac{1}{m!}, \forall \sigma \in \mathfrak{S}_m$, then $p(\tau)=\frac{1}{n!}$.
\end{proof}

\begin{figure}
    \centering
    \includegraphics[width=0.75\columnwidth]{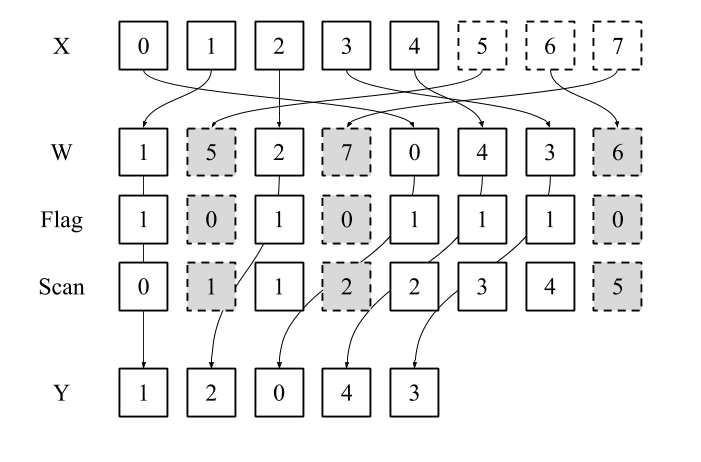}
    \caption{Shuffle compaction --- Dashed boxes represent dummy elements used to extend the sequence to a power of two. The bijective function is evaluated for each thread, yielding $W$.  We flag any $W > 4$, then the scan of these flags provides the destination address for each $X[W_i]$, where $W_i \leq 4$.}
    \label{fig:shuffle}
\end{figure}

\begin{algorithm}
\SetKwInOut{Input}{input}
        \SetKwInOut{Output}{output}
 \Input{$m,n,X,f_n$}
 \Output{$Y$}
 \For{$i=0$ \KwTo $n - 1$ \bf{in parallel}}{
  \tcp{Evaluate bijective function}
  $b = f_n(i)$\;
  $flag = 1$\;
  \If{$b \geq m$}{
   $flag = 0$\;
   }
   \tcp{Find output location of valid elements}
   $out\_idx= exclusive\_scan(flag)$\;
   \tcp{Gather elements to output}
   \If{$b < m$}{
   $Y[out\_idx] = X[b]$\;
   }
 }
 \caption{Bijection Shuffle}
\label{alg:shuffle}
\end{algorithm}

To achieve the compaction in parallel, we use the work-efficient exclusive scan of Blelloch~\cite{blelloch}. Flagging each element of the extended permutation $W$ with $0$ if $w_i \geq m$ or $1$ if $w_i < m$, the output of the exclusive scan over these flags gives the scatter indices into the shuffled output vector $Y$. This process is shown in Figure \ref{fig:shuffle} for $m=5,n=8$, and pseudocode is given in Algorithm \ref{alg:shuffle}. Given $p=n$ independent threads, the exclusive scan operation has work complexity $O(n)$. Our bijective function, executed in parallel for each element, also has work complexity $O(n)$, and so the final algorithm has optimal work complexity $O(n)$. Note that, using the bijective functions described in Section \ref{sec:bijective}, the padded array of length $n$ will have at most $2m$ elements (the nearest power of two length), so the work complexity holds regardless of input length.

Algorithm \ref{alg:shuffle} can be implemented easily in several steps using a GPU parallel primitives library, for example, evaluating the bijective function, gathering elements to an output buffer, then applying a stream compaction algorithm. Achieving ``bandwidth optimality'', i.e., a single global memory read and write per element, is more challenging, requiring the fusion of all operations into a single GPU kernel. The standard GPU parallel prefix sum of \cite{harris2007parallel} uses two passes through global memory, the first to perform block-level scans and the second to propagate partial block-level results globally. This can be improved upon by using the specialised single-pass scan implementation of \cite{merrill2016single}, using a technique termed \textit{`decoupled look-back'} to achieve non-blocking communication between thread blocks, achieving optimal $n$ global memory reads/writes. In Section \ref{sec:evaluation}, we evaluate three versions of our GPU shuffling algorithm, incorporating varying levels of kernel fusion and demonstrating the effectiveness of optimising global memory read/writes for the shuffling problem.

\section{Statistical Tests}
\label{sec:statistical_tests}

We now consider statistical tests for pseudo-random permutations to assess the suitability of the methods discussed in Section \ref{sec:bijective} and \ref{sec:compaction} for practical applications, and to select the number of rounds for the \textit{VariablePhilox} cipher. Testing the distribution of permutations is challenging as the sampling space grows super-exponentially with the length of the permutation. In particular, $21! > 2^{64}$, so any computer algorithm for shuffling that uses a 64 bit integer seed cannot generate all permutations for $n \geq 21$.

Our first test considers the distribution of random permutations at $n=5$, where $5!=120$, using a standard $\chi^2$ test with 119 degrees of freedom, under the null hypothesis that permutations are uniformly distributed, i.e., each permutation occurs with probability $p=\frac{1}{120}$. Figure \ref{fig:chi_squared} shows the change in the $\chi^2$ statistic as the number of rounds in the \textit{VariablePhilox} cipher is increased. Each data point is computed from 100,000 random permutations. We also plot the acceptance thresholds for $\alpha=0.05$, corresponding to the probability of observing a value of the $\chi^2$ statistic above the line indicated on the figure, if the null hypothesis is true. Considering the results shown in the figure, it is unlikely that samples from \textit{VariablePhilox} with less than 20 rounds are drawn from a uniform distribution. This is somewhat surprising given that only 10 rounds are recommended for the original Philox cipher in the PRNG setting~\cite{salmon}. The LCG bijective function generates a $\chi^2$ statistic in the region of 500,000, so it clearly fails the test and is not included in this figure.

The $\chi^2$ test is useful for small $n$, but is intractable otherwise. We develop another test statistic suited to larger $n$, based on the maximum mean discrepancy (MMD) in a reproducing kernel Hilbert space (RKHS). Two-sample hypothesis tests using MMD are developed in \cite{two_sample}. In a similar spirit, we derive a one sample test comparing the uniform distribution of permutations against a finite sample generated by a shuffling algorithm. The $\MMD^2$ between two distributions $p(X)$ and $q(Y)$ in a RKHS $\mathcal{H}$, equipped with positive definite kernel $K$, is defined as
\begin{align}
\label{eq:mmd}
\MMD^2(p,q) = \mathbb{E}_{x,x'}[K(x,x')] - 2\mathbb{E}_{x,y}[K(x,y)] + \mathbb{E}_{y,y'}[K(y,y')].
\end{align}

If the kernel $K$ is a \textit{characteristic kernel}, the mean embedding of a distribution induced by the kernel is injective~\cite{injective}. In other words, the mean embedding of a distribution is unique to that distribution. As a consequence, $\MMD(p,q)=0$ if and only if $p=q$. Thus, a strategy for statistical testing is to form the null hypothesis that $p=q$, compute the sample estimate $\hat{\MMD^2}(p,q)$ as the test statistic, and evaluate the probability of obtaining a sample estimate greater than some threshold, assuming $p=q$, using a concentration inequality. If the observed test statistic is sufficiently unlikely, this provides evidence that $p\neq q$.

To implement this idea, a characteristic kernel measuring the similarity of two permutations is needed. The Mallows kernel, for $\lambda \geq 0$, is defined for permutations as
$$K^{\lambda}_M(\sigma, \sigma') = e^{-\lambda n_{\textrm{dis}}(\sigma, \sigma')/\binom{n}{2}}.$$
where 
$$n_{\textrm{dis}}(\sigma, \sigma') = \sum_{i<j} [\mathbbm{1}_{\sigma(i) < \sigma(j)}\mathbbm{1}_{\sigma'(i) > \sigma'(j)} + \mathbbm{1}_{\sigma(i) > \sigma(j)}\mathbbm{1}_{\sigma'(i) < \sigma'(j)}].$$
We (somewhat arbitrarily) use the parameter $\lambda=5$ throughout this paper. The Mallows kernel is introduced in \cite{permutation_kernels} and shown to be characteristic in \cite{mania2018kernel}. It may be implemented in time $O(n \log n)$ using the procedure presented in \cite{kt_complexity}. In the following, we make use of the fact \cite{mitchell2021sampling} that the expected value of the Mallows kernel under a uniform distribution of permutations is
$$\forall \sigma \in \mathfrak{S}_n, \quad \mathbb{E}_{\sigma'}[K(\sigma,\sigma')]=\prod_{j=1}^{n}\frac{1-e^{-\lambda j / \binom{n}{2}}}{j(1-e^{-\lambda / \binom{n}{2}})}.$$
The Mallows kernel is right-invariant in the sense that $K(\sigma, \sigma')= K(\tau\sigma, \tau\sigma')$ for any $\tau\in  \mathfrak{S}_d$~\cite{diaconis1988group}. Also note that the uniform distribution of permutations is invariant to composition. Using right invariance in conjunction with $\tau=\sigma^{-1}$, and assuming $p$ denotes the uniform distribution over permutations, then 
\begin{align*}
\mathbb{E}_{\sigma' \sim p}[K(\sigma,\sigma')] &= \mathbb{E}_{\sigma' \sim p}[K(\tau\sigma,\tau\sigma')] \\
    &= \mathbb{E}_{\sigma' \sim p}[K(I,\tau \sigma')] \\
    &= \mathbb{E}_{\sigma' \sim p}[K(I,\sigma')]
\end{align*}
with $I$ the identity permutation. Using the above, and assuming that $p$ is a uniform distribution of permutations, the $\MMD^2$ from \eqref{eq:mmd} is simplified as
\begin{align*}
\MMD^2(p,q) &= \mathbb{E}_{x}[K(I,x)] - 2\mathbb{E}_{x}[K(I,x)] + \mathbb{E}_{y,y'}[K(y,y')]  \\
&= \mathbb{E}_{y,y'}[K(y,y')] - \mathbb{E}_{x}[K(I,x)] \\
&= \mathbb{E}_{y,y'}[K(y,y')] - \prod_{j=1}^{n}\frac{1-e^{-\lambda j / \binom{n}{2}}}{j(1-e^{-\lambda / \binom{n}{2}})}.
\end{align*}
Replacing $q$ with a sample of random permutations $\Pi$, of size $|\Pi|$ and with $|\Pi| \equiv 0 \pmod 2$, we obtain an unbiased sample estimate via
\begin{align}
\label{eq:mmd_estimate}
\hat{\MMD}^2(p,\Pi) &= \frac{2}{|\Pi|} \sum_{i}^{|\Pi|/2}K(\Pi_{2i-1},\Pi_{2i}) - \prod_{j=1}^{n}\frac{1-e^{-\lambda j / \binom{n}{2}}}{j(1-e^{-\lambda / \binom{n}{2}})},
\end{align}
where $\mathbb{E}[\hat{\MMD}^2(p,\Pi)]=0$ if and only if $p=q$. We derive two tests based on the $\hat{\MMD}^2(p,\Pi)$ statistic. The first is a distribution-free bound. Applying Hoeffding's inequality~\cite{inequalities}, with the fact that $0 \leq K(\sigma,\sigma') \leq 1$, we have
\begin{align*}
P(|\hat{\MMD}^2(p,\Pi)| \geq t)  \leq 2 \exp(-|\Pi| t^2)
\end{align*}
Let $\alpha_H$ be the level of significance under the Hoeffding bound. With null hypothesis $p=q$, the acceptance region of the test is
\begin{align}
|\hat{\MMD}^2(p,\Pi)| < \sqrt{\frac{\log(2/\alpha_H)}{|\Pi|}}.
\end{align}

The second test uses the central limit theorem, implying that $\hat{\MMD}^2(p,\Pi)$ approaches a normal distribution as $|\Pi| \rightarrow \infty$. The variance of the Mallows kernel may be computed directly from its expectation as
\begin{align*}
    \mathrm{Var}(K(\sigma,\sigma')) &= \mathbb{E}[K(\sigma,\sigma')^2] - \mathbb{E}[K(\sigma,\sigma')]^2 \\
    &= \mathbb{E}[e^{-2\lambda n_{\mathrm{dis}}(\sigma,\sigma')/\binom{n}{2}}]-\mathbb{E}[e^{-\lambda n_{\mathrm{dis}}(\sigma,\sigma')/\binom{n}{2}}]^2 \\
    &= \prod_{j=1}^{n}\frac{1-e^{-2\lambda j / \binom{n}{2}}}{j(1-e^{-2\lambda / \binom{n}{2}})} - \left( \prod_{j=1}^{n}\frac{1-e^{-\lambda j / \binom{n}{2}}}{j(1-e^{-\lambda / \binom{n}{2}})} \right)^2.
\end{align*}
So we have
$$\mathrm{Var}(\hat{\MMD}^2(p,\Pi)) = \frac{2 \cdot \mathrm{Var}(K(\sigma,\sigma'))}{|\Pi|}.$$
The factor of two arises because the sum in \eqref{eq:mmd_estimate} is of size $|\Pi|/2$. According to a normal distribution with mean 0 and variance as above,
$$P(|\hat{\MMD}^2(p,\Pi)| \geq t) = 1 - \mathrm{erf}\left(\frac{t}{\sqrt{2 \mathrm{Var}(\hat{\MMD}^2)}}\right),$$
where $\mathrm{erf}$ is the canonical error function.
The acceptance region for $\alpha_{N}$ is
\begin{align}
|\hat{\MMD}^2(p,\Pi)| < \sqrt{2 \mathrm{Var}(\hat{\MMD}^2)} \, \mathrm{erf}^{-1}(1-\alpha_N).
\end{align}

The threshold for $\alpha_H$ should be used for small $|\Pi|$ (i.e. $<100$), as it makes no assumptions on the distribution of $\hat{\MMD}^2(p,\Pi)$. For larger sample sizes, the asymptotic $\alpha_N$ threshold is significantly tighter and should be preferred. Comparing Figures \ref{fig:chi_squared} and \ref{fig:mmd_n5}, we see the MMD tests with the asymptotic acceptance threshold roughly coincide with the $\chi^2$ test at n = 5, where some tests fail for VariablePhilox with fewer than 20 rounds, but VariablePhilox with 24 rounds or more passes all tests. Figures \ref{fig:mmd_n5}, \ref{fig:mmd_n100}, and \ref{fig:mmd_n1000} plot the $|\hat{\MMD}^2(p,\Pi)|$ statistic for VariablePhilox, LCG, and std::shuffle using $|\Pi|=$100,000, for varying permutation lengths. These experiments lead us to recommend a 24 round VariablePhilox cipher for random permutation generation, and we use this configuration in all subsequent experiments.

\begin{figure}
\centering
\begin{minipage}[b]{.49\linewidth}
    \centering
    \includegraphics[width=1\columnwidth]{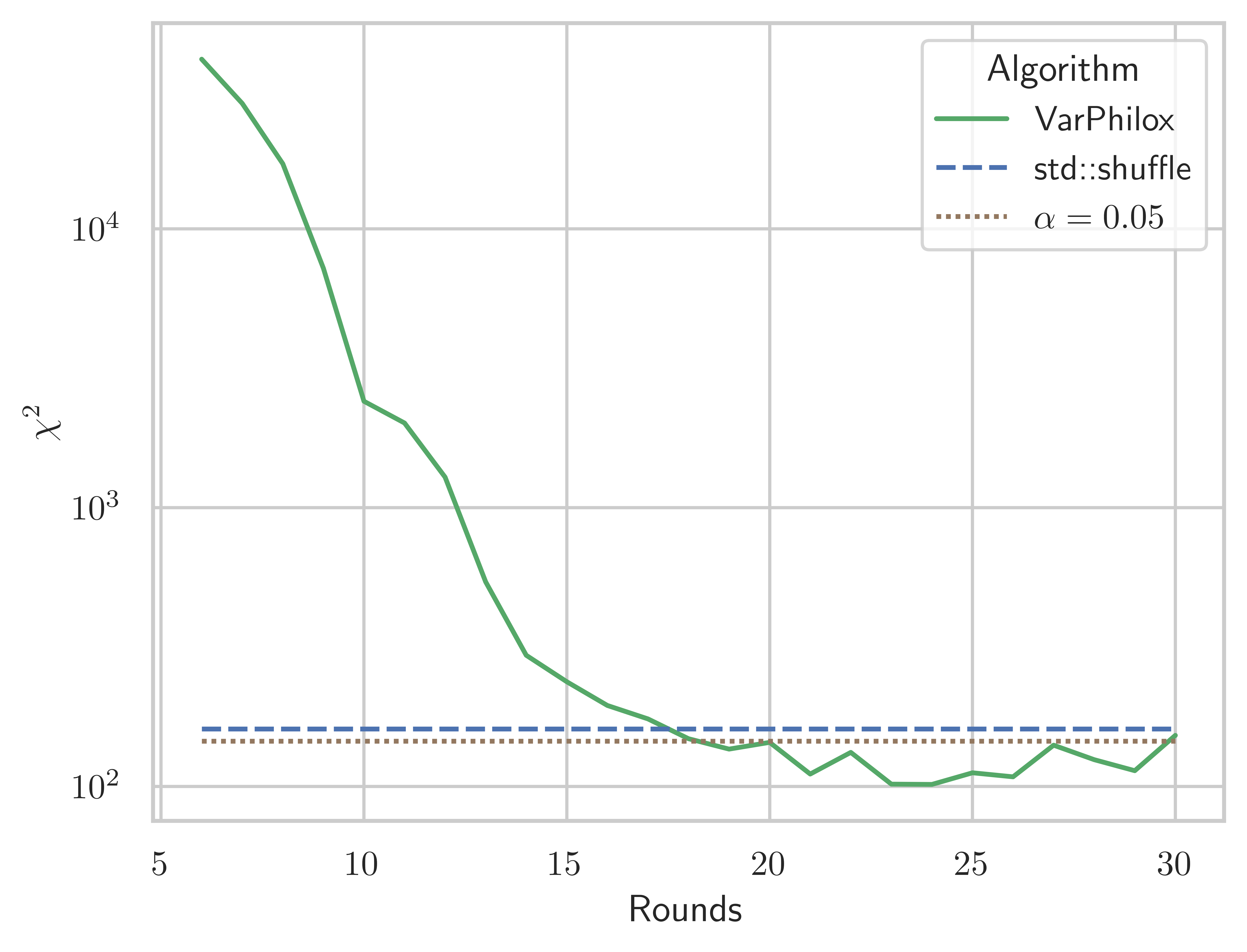}
    \caption{$\chi^2$ statistic vs. rounds, $|\Pi|=$100,000}
    \label{fig:chi_squared}
\end{minipage}
\begin{minipage}[b]{.49\linewidth}
    \centering
    \includegraphics[width=1\columnwidth]{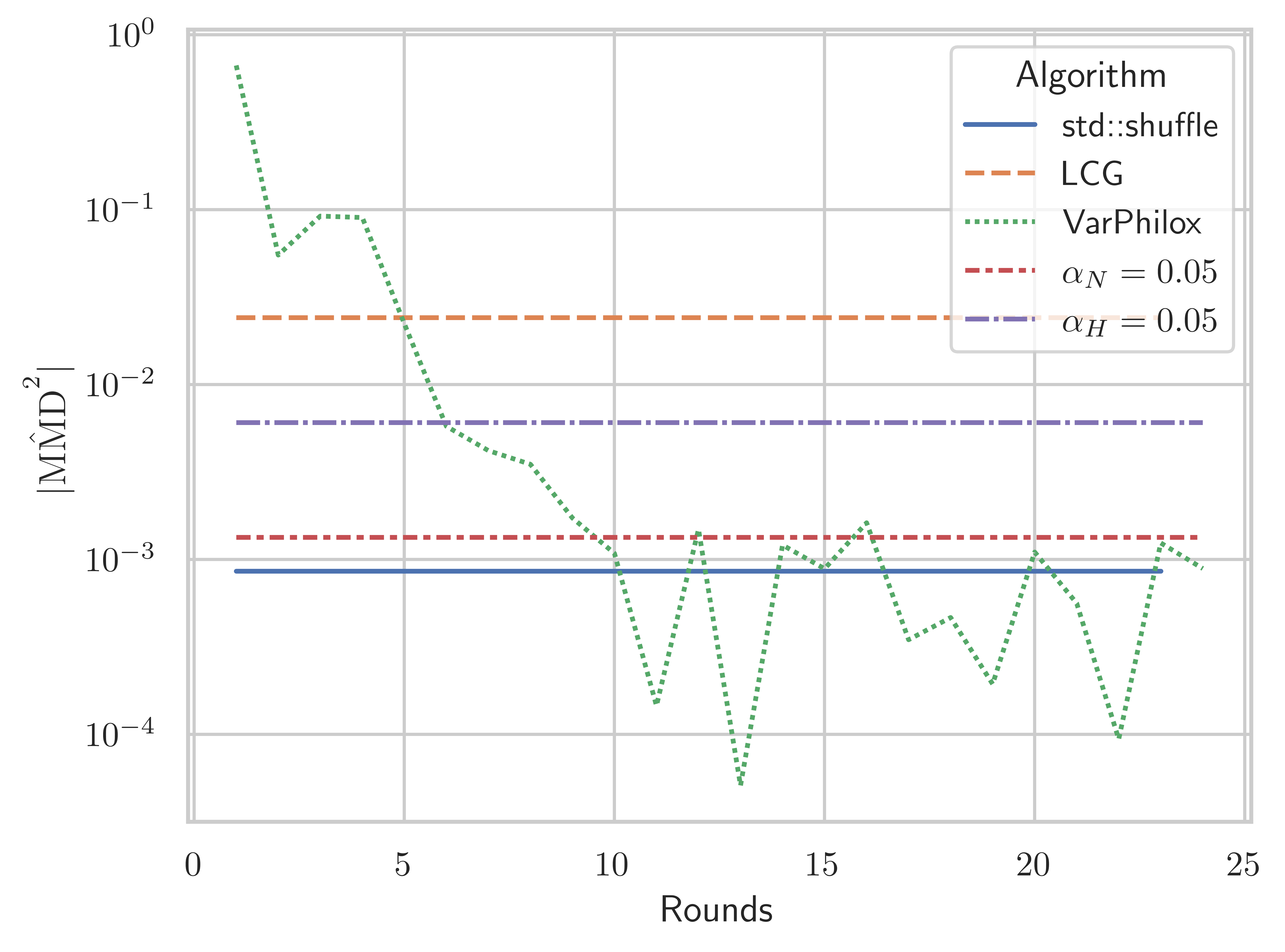}
    \caption{$|\hat{\MMD}^2(p,\Pi)|$ statistic, $n=5$, $|\Pi|=$100,000}
    \label{fig:mmd_n5}
\end{minipage}
\\
\vspace{0.6cm}
\begin{minipage}[b]{.49\linewidth}
    \centering
    \includegraphics[width=1\columnwidth]{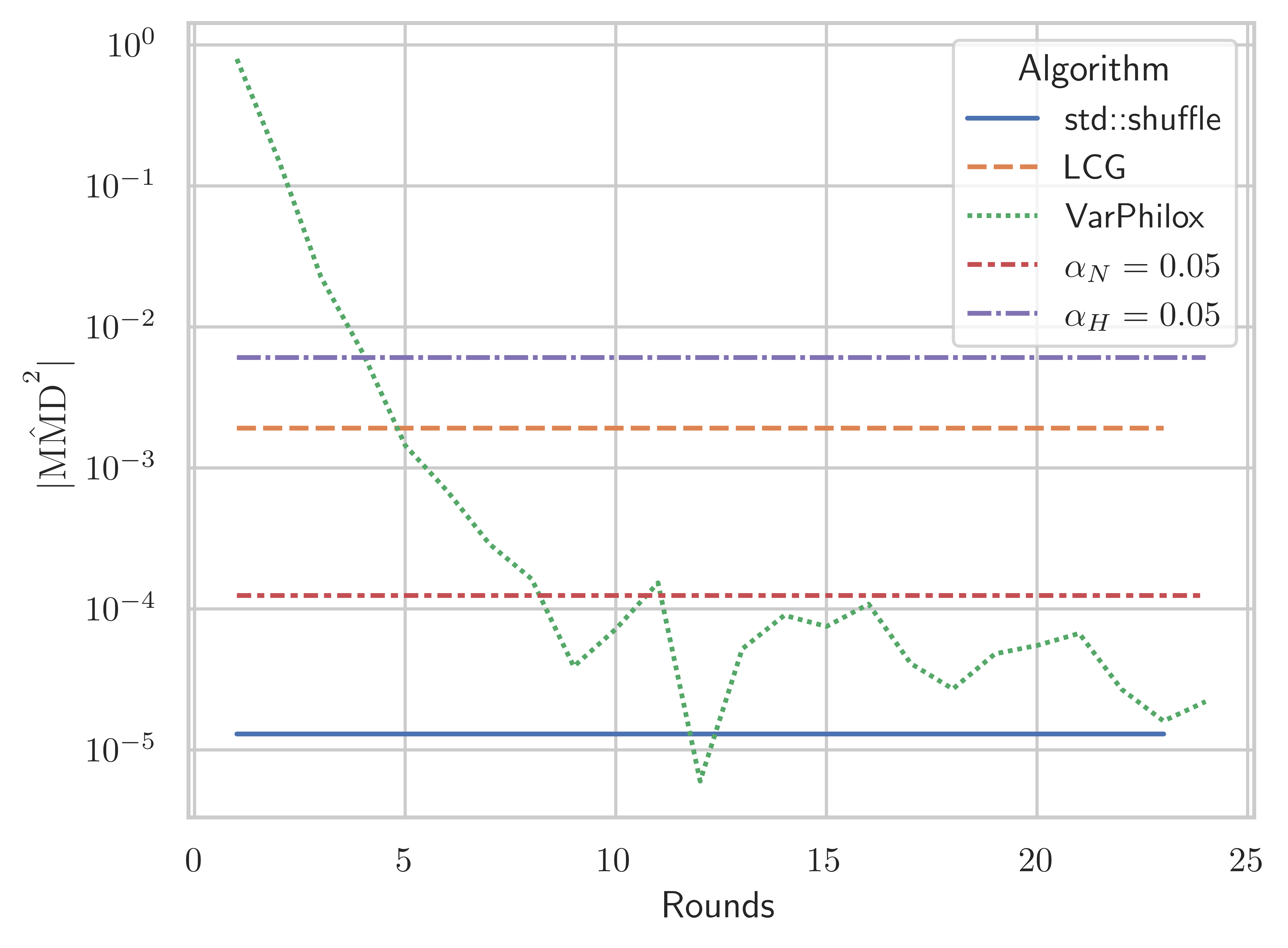}
    \caption{$|\hat{\MMD}^2(p,\Pi)|$ statistic, $n=100$, $|\Pi|=$100,000}
    \label{fig:mmd_n100}
\end{minipage}
\begin{minipage}[b]{.49\linewidth}
    \centering
    \includegraphics[width=1\columnwidth]{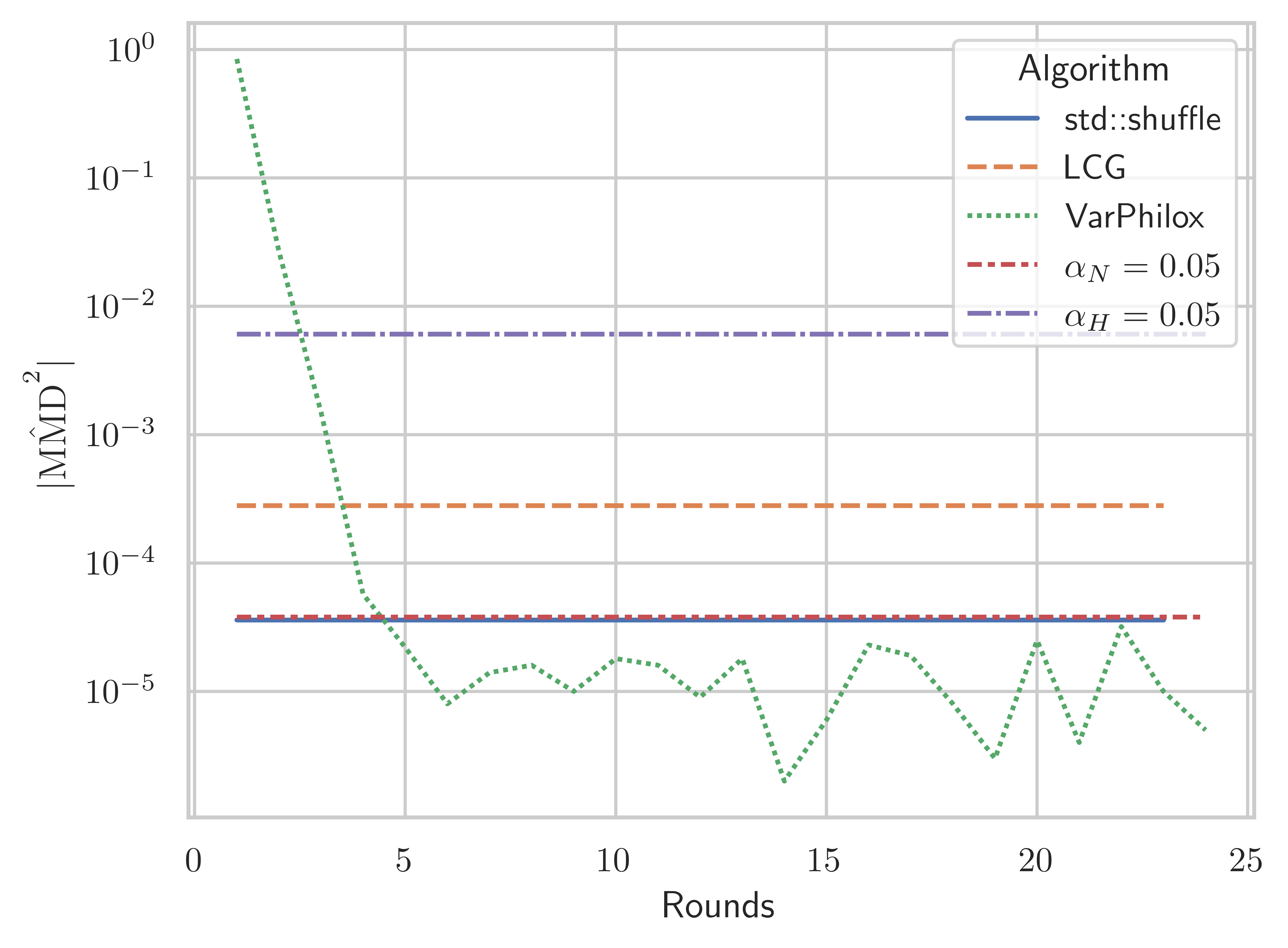}
    \caption{$|\hat{\MMD}^2(p,\Pi)|$ statistic, $n=1000$, $|\Pi|=$100,000}
    \label{fig:mmd_n1000}
\end{minipage}
\end{figure}

\section{Evaluation of throughput}
\label{sec:evaluation}
We now evaluate the throughput of our bijective shuffling method, where throughput refers to (millions of keys)/time(s). Unless otherwise stated, experiments are performed on an array of 64-bit keys of length $2^w+1$, where $w$ ranges from 8 to 29. This represents the worst-case scenario, where our bijective shuffle algorithm must redundantly evaluate $2^{w}-1$ elements. For all throughput results, we report the average of five trials.

To consider the effect of code optimisations, we evaluate three CUDA implementations of Algorithm \ref{alg:shuffle} with varying levels of GPU kernel fusion:

\begin{itemize}
  \item \textit{Bijective0}: The transformation ($b=f_n(i)$), stream compaction, and gather are implemented in separate passes.
  \item \textit{Bijective1}: The transformation, stream compaction, and gather are fused into a single scan operation. The two-pass scan algorithm of \cite{harris2007parallel} is used.
  \item \textit{Bijective2}: The transformation, stream compaction, and gather are fused into a single scan operation. The single pass scan algorithm of \cite{merrill2016single} is used, for $m$ total global memory reads/writes. 
\end{itemize}

Figure \ref{fig:fusion} plots the throughput of these variants, on a Tesla V100-32GB GPU, using VariablePhilox as the bijective function. For reference, the line labeled ``gather'' shows an upper bound on throughput for $n$ random gather operations in global memory. The optimised algorithm \textit{Bijective2} closely matches the optimal throughput of random gather, and is equivalent in performance for sizes $> 2^{21}$. \textit{Bijective2(n=m)} indicates the best-case performance of the algorithm, where the sequence is exactly a power of two length. The best-case and worst-case performance do not differ greatly because redundant elements do not incur global memory transactions, and are therefore relatively inexpensive to evaluate. The performance of random gather peaks around $2^{20}$, where there is sufficient L2 cache to mitigate the effects of uncoalesced reads/writes. At this size, there is a gap between \textit{Bijective2} and random gather due to arithmetic operations required in evaluating multiple rounds of the VariablePhilox function, and because the prefix-sum must be redundantly evaluated up to the nearest power of two (although no global memory transactions occur for these redundant elements). This gap disappears at larger sizes when the runtime of the kernel becomes dominated by memory operations.

\begin{figure}
    \centering
    \includegraphics[width=0.75\columnwidth]{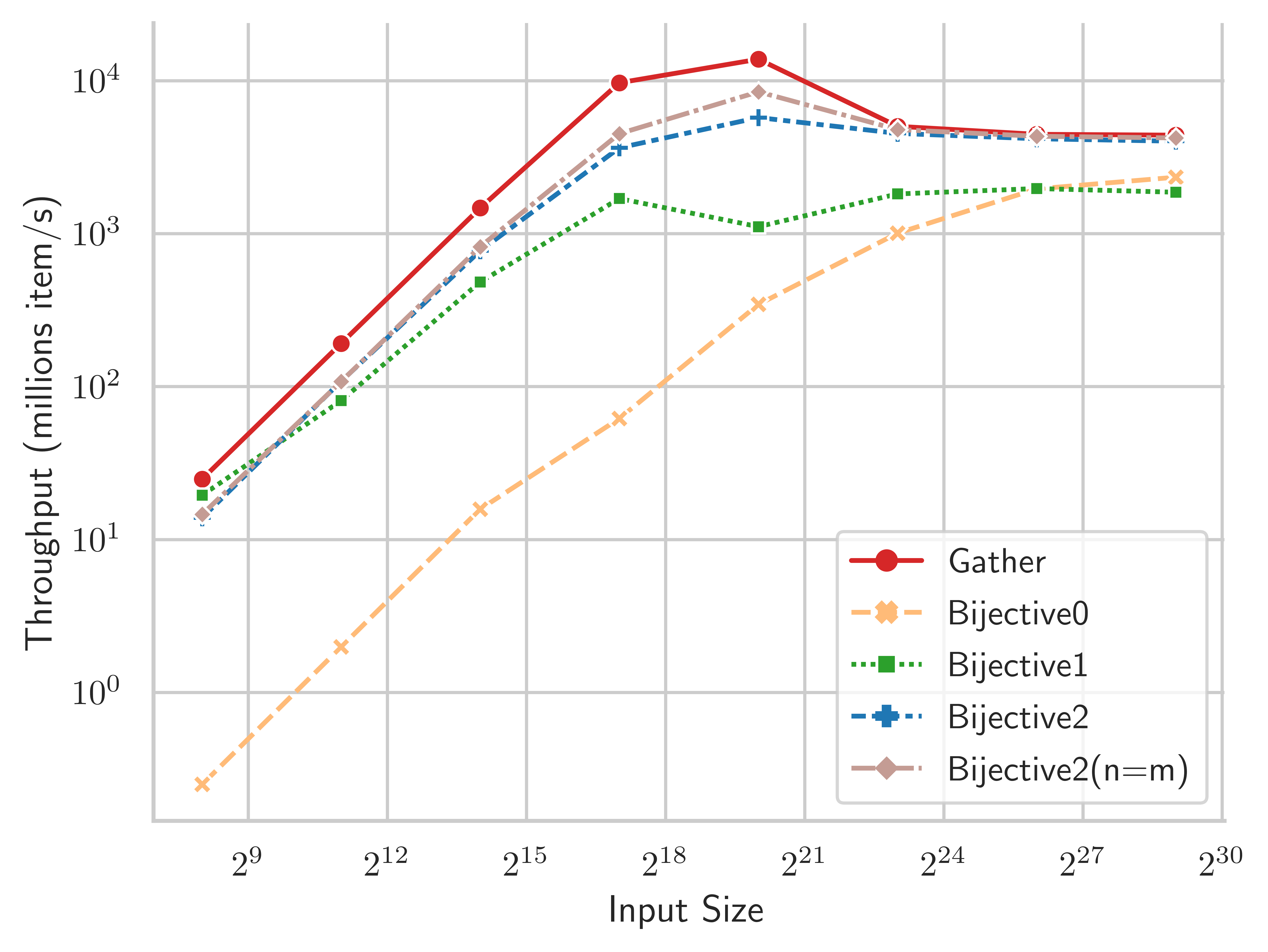}
    \caption{GPU kernel fusion for bijective shuffle --- increasing kernel fusion decreases memory transactions and approaches the gather throughput upper bound.}
    \label{fig:fusion}
\end{figure}

We now proceed to a comparison of different shuffling algorithms. Table \ref{tab:gpu_shuffling} and Figure \ref{fig:gpu_shuffling} show the throughput in millions of items per second, for the Tesla V100-32GB GPU, and Table \ref{tab:gpu_shuffling_2080} and Figure \ref{fig:gpu_shuffling_2080} show data for the GeForce RTX 2080 GPU. LCG and VariablePhilox implement optimised bijective shuffling using the methods described in Section \ref{sec:bijective}. The dart-throwing algorithm uses CUDA atomic exchange instructions to attempt to place items randomly in a buffer of size $2n$, a stream compaction operation is then applied to the buffer to provide the final shuffled output. SortShuffle applies the state-of-the-art radix sort algorithm of \cite{merrill2015cub} to randomly generated 64-bit keys. As discussed in Section \ref{sec:existing}, many existing work on shuffling reduce to sorting algorithms on infinite length keys. Thus, SortShuffle is representative of a wide class of divide-and-conquer algorithms.

The bijective shuffle algorithms with the VariablePhilox and LCG functions achieve near-optimal throughput at large sizes, where performance is dominated by global memory operations. Throughput for these methods is more than an order of magnitude higher than the DartThrowing or SortShuffle methods. The lesser throughput of DartThrowing can be explained by the overhead of generalised atomic instructions in the CUDA architecture, as well as contention among threads when multiple threads attempt to write to the same memory location. SortShuffle relies on radix sort, where prefix sum is applied to 4 bits in each pass, requiring 16 passes over the data to fully sort 64-bit keys. In comparison, bijective shuffle is designed to require only a single prefix sum operation. Bijective shuffle is also fully deterministic, unlike DartThrowing, and requires no global memory for working space, whereas DartThrowing and SortShuffle use memory proportional to $O(n)$.
\begin{table}
\caption{GPU shuffling throughput (millions items/s) - Tesla V100}
\label{tab:gpu_shuffling}
\centering
\begin{tabular}{rrrrrrrrr}
\toprule
{} & Gather & VarPhilox &    LCG & DartThrowing & SortShuffle \\
Input size   &        &        &        &              &             \\
\midrule
$2^{8} + 1$  &  24.64 &  14.61 &  16.21 &        2.459 &       1.845 \\
$2^{11} + 1$ &  188.8 &  115.4 &  124.8 &        15.83 &       8.191 \\
$2^{14} + 1$ &   1471 &  810.6 &  851.4 &         97.3 &       26.17 \\
$2^{17} + 1$ &   9696 &   3836 &   3968 &         145. &       69.26 \\
$2^{20} + 1$ &  13830 &   5800 &   5642 &        159.8 &       127.3 \\
$2^{23} + 1$ &   5036 &   4527 &   4476 &        150.9 &       132.6 \\
$2^{26} + 1$ &   4465 &   4172 &   4143 &        133.8 &       118.4 \\
$2^{29} + 1$ &   4409 &   4018 &   4011 &        123.2 &       111.8 \\
\bottomrule
\end{tabular}
\end{table}

\begin{table}
\caption{GPU shuffling throughput (millions items/s) - GeForce RTX 2080}
\label{tab:gpu_shuffling_2080}
\centering
\begin{tabular}{rrrrrrrr}
\toprule
{} &   Gather &  VarPhilox &     LCG &  DartThrowing &  SortShuffle \\
Input size   &          &            &         &               &              \\
\midrule
$2^{8} + 1$  &    35.63 &      13.73 &   14.19 &          2.98 &         2.16 \\
$2^{11} + 1$ &   273.70 &     109.07 &  112.40 &         18.10 &         8.54 \\
$2^{14} + 1$ &  2183.41 &     776.85 &  846.42 &         66.74 &        30.62 \\
$2^{17} + 1$ & 12225.35 &    3649.39 & 3759.00 &         95.38 &        71.17 \\
$2^{20} + 1$ &  4187.73 &    3641.61 & 3437.88 &         89.55 &        80.63 \\
$2^{23} + 1$ &  2238.91 &    2232.61 & 2260.85 &         80.67 &        75.03 \\
$2^{26} + 1$ &  2097.25 &    2096.47 & 2105.51 &         71.44 &        67.35 \\
\bottomrule
\end{tabular}
\end{table}

\begin{figure}
\centering
\begin{minipage}[b]{.49\linewidth}
    \centering
    \includegraphics[width=1\columnwidth]{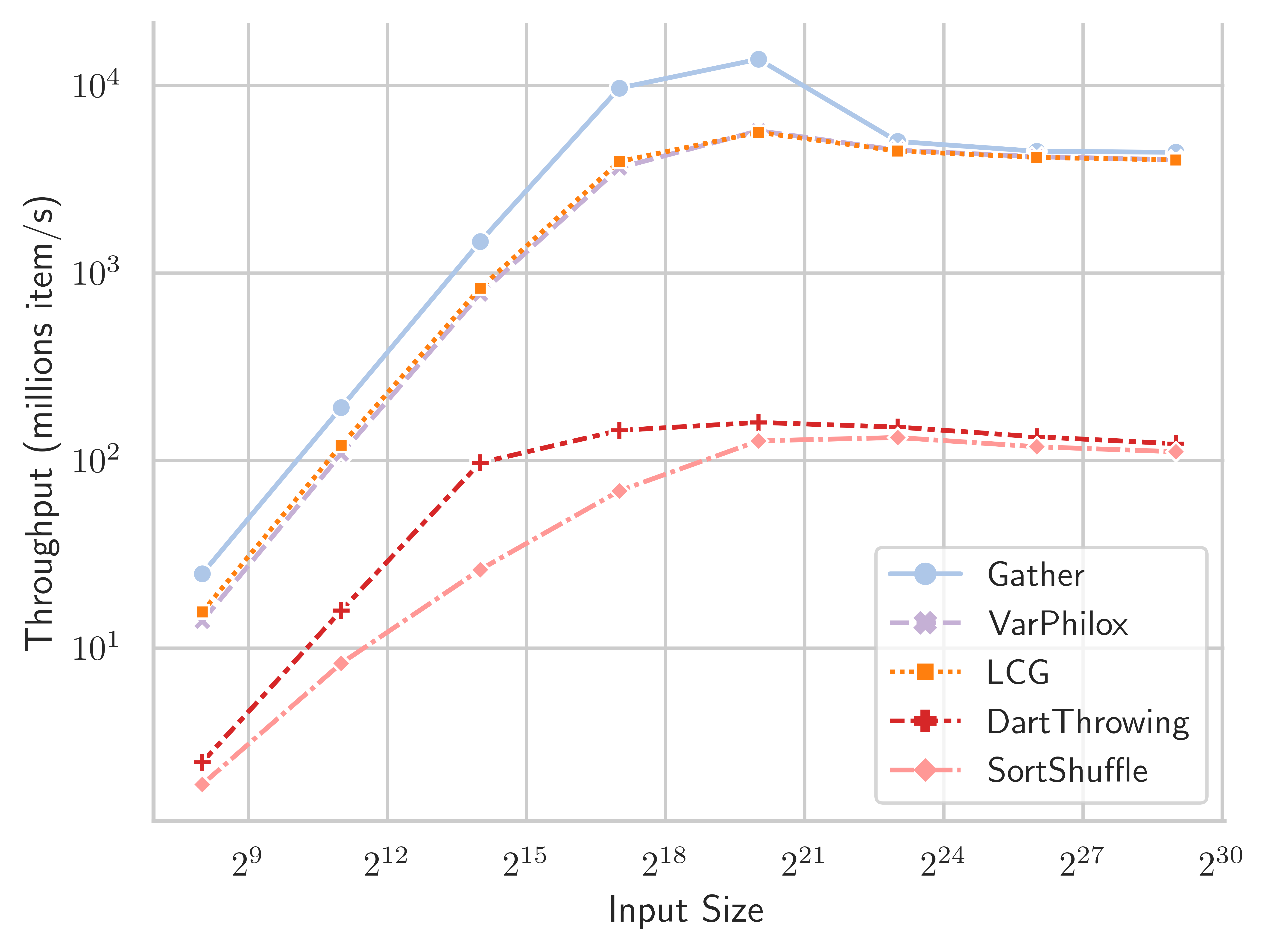}
    \caption{GPU shuffling algorithms - Tesla V100}
    \label{fig:gpu_shuffling}
\end{minipage}
\begin{minipage}[b]{.49\linewidth}
    \centering
    \includegraphics[width=1\columnwidth]{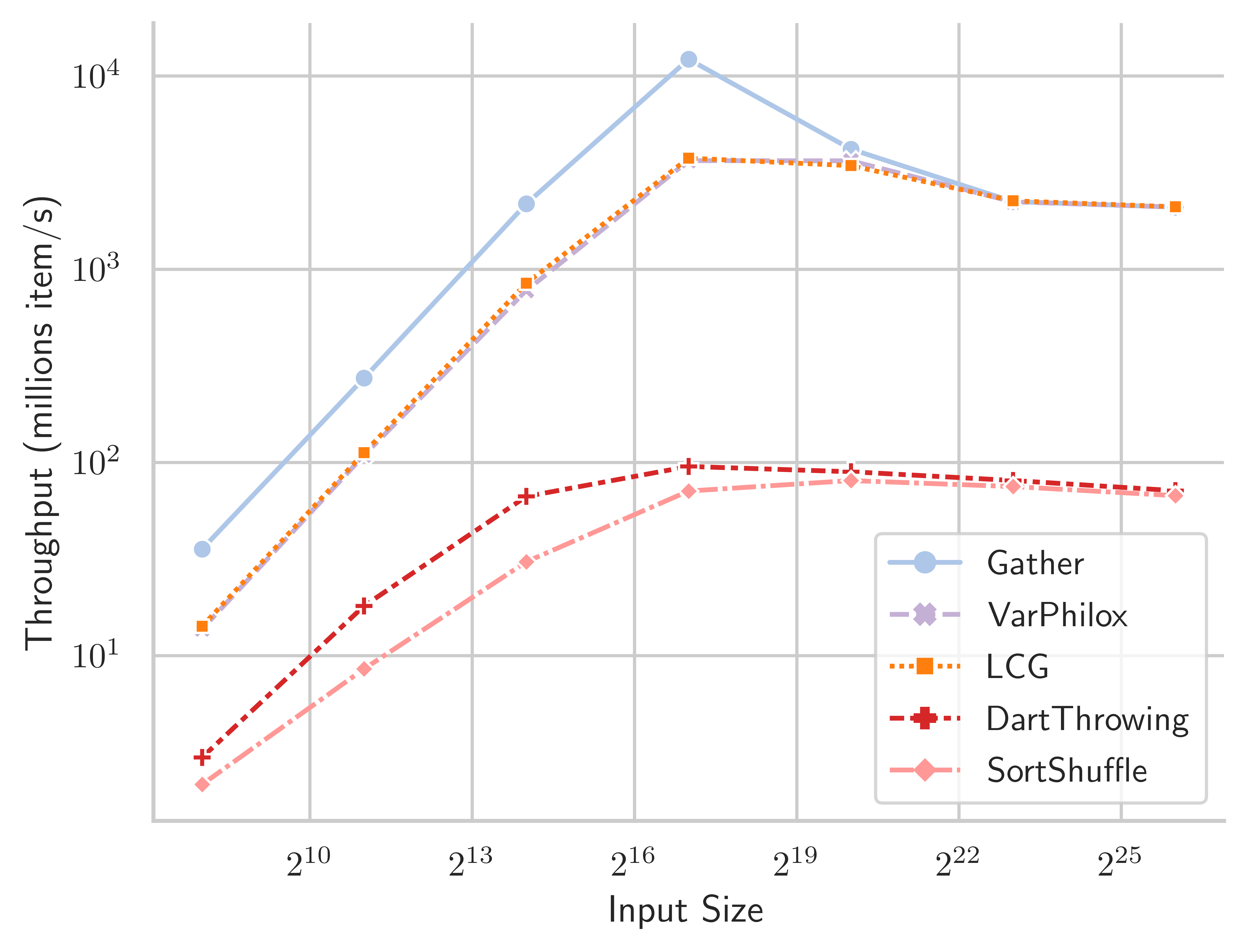}
    \caption{GPU shuffling algorithms - GeForce RTX 2080}
    \label{fig:gpu_shuffling_2080}
\end{minipage}
\end{figure}

We also evaluate the performance of our shuffling method using 2x Intel Xeon E5-2698 CPUs, with a total of 40 physical cores. Results are presented in Table \ref{tab:cpu_shuffling} and Figure \ref{fig:cpu_shuffling}. The bijective shuffle method is not designed for CPU architectures, but the results are informative nonetheless. The Gather, VariablePhilox (bijective shuffle) and SortShuffle algorithms are implemented using the Intel Thread Building Blocks (TBB) library~\cite{tbb}. The Rao-Sandelius (RS)~\cite{rao} and MergeShuffle~\cite{mergeshuffle} algorithms use the implementation of \cite{mergeshuffle}, and std::shuffle is the single-threaded C++ standard library implementation. VariablePhilox has the highest throughput for medium-sized inputs, although it is outperformed by RS when $n>2^{24}$. Curiously, RS has higher throughput at large sizes than the TBB gather implementation. This could indicate that RS has a more cache-friendly implementation, despite having a worse computational complexity of $O(n \log n)$, or simply a suboptimal implementation in TBB. Further work is needed to properly investigate these effects for CPU architectures.

\begin{table}
\caption{CPU shuffling throughput (millions items/s) - 2x Xeon E5-2698}
\label{tab:cpu_shuffling}
\centering
\begin{tabular}{rrrrrrrrr}
\toprule
{} &  Gather &  VarPhilox &  DartThrowing &  std::shuffle &     RS &  MergeShuffle &  SortShuffle \\
Input size   &         &         &               &               &        &               &              \\
\midrule
$2^{8} + 1$  &  128.03 &    2.81 &          0.09 &         40.77 &  41.18 &         46.24 &         2.45 \\
$2^{11} + 1$ &  635.04 &   17.65 &          0.69 &         63.83 &  63.82 &         64.95 &         6.26 \\
$2^{14} + 1$ &  935.55 &   89.88 &          4.76 &         68.05 &  68.00 &         67.85 &         8.30 \\
$2^{17} + 1$ &  434.46 &  207.94 &         21.69 &         63.05 &  78.71 &         52.03 &        18.60 \\
$2^{20} + 1$ &  353.72 &  246.79 &         39.53 &         59.38 &  99.76 &         32.94 &        79.30 \\
$2^{23} + 1$ &  108.23 &  157.88 &         21.37 &         45.67 & 108.78 &         23.44 &        60.11 \\
$2^{26} + 1$ &   64.32 &   49.02 &         16.04 &         24.93 & 111.15 &         18.38 &        47.45 \\
$2^{29} + 1$ &   54.12 &   44.06 &         15.63 &         21.66 & 104.99 &         15.03 &        39.23 \\
\bottomrule
\end{tabular}
\end{table}

\begin{figure}
    \centering
    \includegraphics[width=0.75\columnwidth]{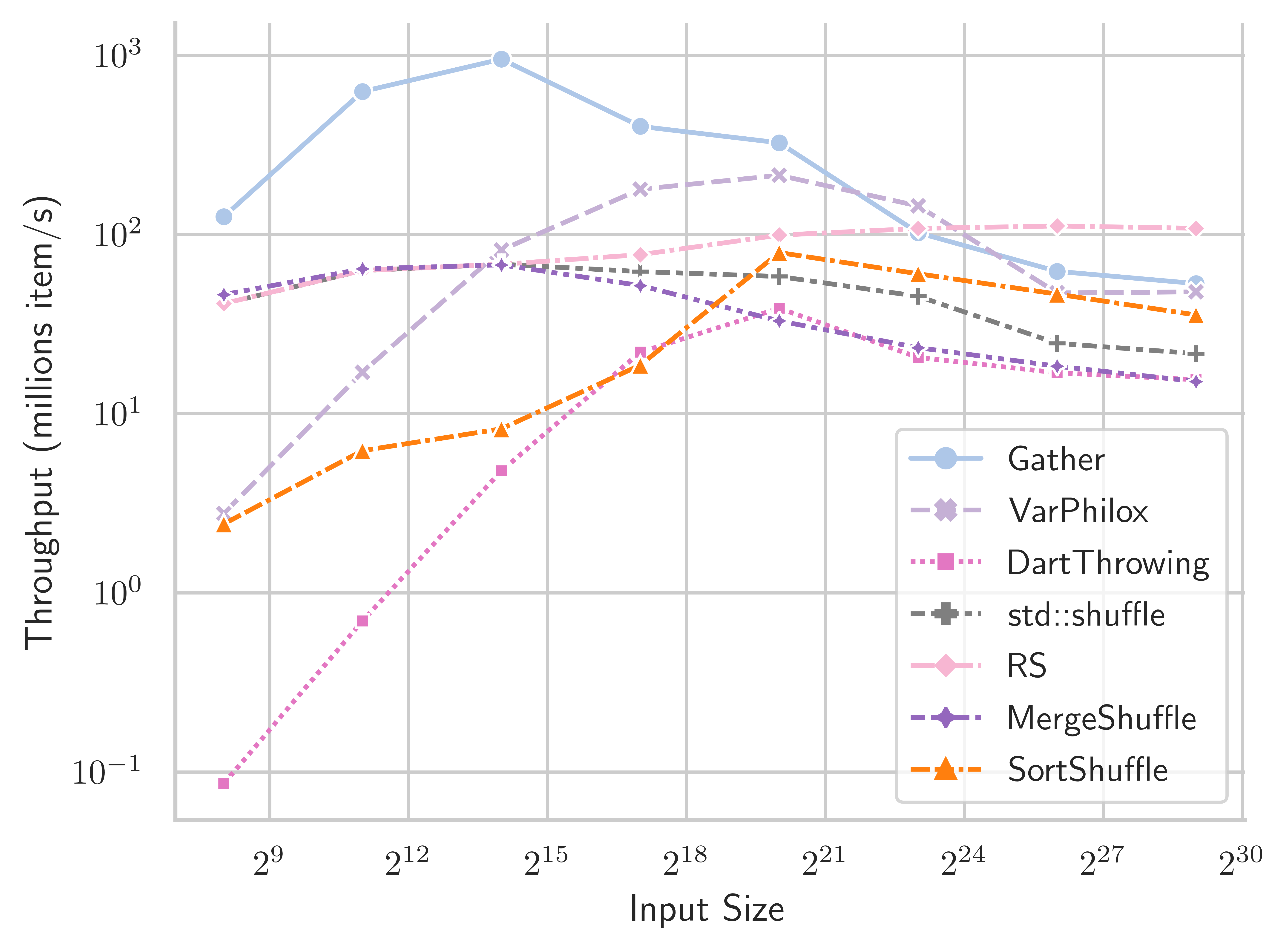}
    \caption{CPU shuffling algorithms - 2x Intel Xeon E5-2698}
    \label{fig:cpu_shuffling}
\end{figure}

\section{Limitations and Future Research}
While the presented algorithm for GPUs is highly effective in terms of throughput and the quality of its pseudorandom outputs, its parameterisation differs from that of typical standard library implementations of shuffle. For example, the C++ standard library implementation accepts any `uniform random bit generator'~\cite{stroustrup2013c++}, such as the common Mersenne-twister generator~\cite{mersenne}. Our implementation is instead parameterised by the selection of a bijective function, and a key of length $\log(n)k$ bits, where $n$ is the length of the sequence to be shuffled, and $k$ is the number of rounds in the cipher.

Future work may explore other useful constructions of bijective functions, beyond the LCG and Feistel variants discussed in this paper.

It also should be noted that the Fisher-Yates and Rao-Sandelius algorithms may operate in-place, and our proposed GPU algorithm does not, requiring the allocation of an output buffer. This is not unexpected, as there are few truly in-place GPU algorithms that reorder inputs (\cite{peters2009fast} is a notable exception). For future work, it would be interesting to consider if an in-place parallel shuffling algorithm is possible for GPU architectures.

\section{Conclusion}
We provide an algorithm for random shuffling specifically optimised for GPU architectures, using modified bijective functions from cryptography. Our algorithm is highly practical, achieving performance approaching the maximum possible device throughput, while being fully deterministic, using no extra working space, and providing high-quality distributions of random permutations. An outcome of this work is also a statistical test for uniform distributions of permutations based on the Mallows kernel that we expect to be useful beyond shuffling algorithms.

\bibliographystyle{ACM-Reference-Format}
\bibliography{manuscript}


\begin{thebibliography}{63}


\ifx \showCODEN    \undefined \def \showCODEN     #1{\unskip}     \fi
\ifx \showDOI      \undefined \def \showDOI       #1{#1}\fi
\ifx \showISBNx    \undefined \def \showISBNx     #1{\unskip}     \fi
\ifx \showISBNxiii \undefined \def \showISBNxiii  #1{\unskip}     \fi
\ifx \showISSN     \undefined \def \showISSN      #1{\unskip}     \fi
\ifx \showLCCN     \undefined \def \showLCCN      #1{\unskip}     \fi
\ifx \shownote     \undefined \def \shownote      #1{#1}          \fi
\ifx \showarticletitle \undefined \def \showarticletitle #1{#1}   \fi
\ifx \showURL      \undefined \def \showURL       {\relax}        \fi
\providecommand\bibfield[2]{#2}
\providecommand\bibinfo[2]{#2}
\providecommand\natexlab[1]{#1}
\providecommand\showeprint[2][]{arXiv:#2}

\bibitem[\protect\citeauthoryear{Alonso and Schott}{Alonso and Schott}{1996}]%
        {AlonsoSchott}
\bibfield{author}{\bibinfo{person}{Laurent Alonso} {and} \bibinfo{person}{René
  Schott}.} \bibinfo{year}{1996}\natexlab{}.
\newblock \showarticletitle{A parallel algorithm for the generation of a
  permutation and applications}.
\newblock \bibinfo{journal}{\emph{Theoretical Computer Science}}
  \bibinfo{volume}{159}, \bibinfo{number}{1} (\bibinfo{year}{1996}),
  \bibinfo{pages}{15 -- 28}.
\newblock
\showISSN{0304-3975}


\bibitem[\protect\citeauthoryear{Anderson}{Anderson}{1990}]%
        {AndersonFisherYates}
\bibfield{author}{\bibinfo{person}{R. Anderson}.}
  \bibinfo{year}{1990}\natexlab{}.
\newblock \showarticletitle{Parallel Algorithms for Generating Random
  Permutations on a Shared Memory Machine}. In
  \bibinfo{booktitle}{\emph{Proceedings of the Second Annual ACM Symposium on
  Parallel Algorithms and Architectures}} (Island of Crete, Greece)
  \emph{(\bibinfo{series}{SPAA ’90})}. \bibinfo{publisher}{Association for
  Computing Machinery}, \bibinfo{address}{New York, NY, USA},
  \bibinfo{pages}{95–102}.
\newblock
\showISBNx{0897913701}


\bibitem[\protect\citeauthoryear{Andr{\'e}s and P{\'e}rez}{Andr{\'e}s and
  P{\'e}rez}{2011}]%
        {LCGShuffle}
\bibfield{author}{\bibinfo{person}{David~Miraut Andr{\'e}s} {and}
  \bibinfo{person}{Luis~Pastor P{\'e}rez}.} \bibinfo{year}{2011}\natexlab{}.
\newblock \showarticletitle{Efficient Parallel Random Rearrange}. In
  \bibinfo{booktitle}{\emph{International Symposium on Distributed Computing
  and Artificial Intelligence}}, \bibfield{editor}{\bibinfo{person}{Ajith
  Abraham}, \bibinfo{person}{Juan~M. Corchado},
  \bibinfo{person}{Sara~Rodr{\'i}guez Gonz{\'a}lez}, {and}
  \bibinfo{person}{Juan~F. De~Paz~Santana}} (Eds.).
  \bibinfo{publisher}{Springer Berlin Heidelberg}, \bibinfo{address}{Berlin,
  Heidelberg}, \bibinfo{pages}{183--190}.
\newblock
\showISBNx{978-3-642-19934-9}


\bibitem[\protect\citeauthoryear{Bacher, Bodini, Hollender, and
  Lumbroso}{Bacher et~al\mbox{.}}{2015}]%
        {mergeshuffle}
\bibfield{author}{\bibinfo{person}{Axel Bacher}, \bibinfo{person}{Olivier
  Bodini}, \bibinfo{person}{Alexandros Hollender}, {and}
  \bibinfo{person}{J{\'{e}}r{\'{e}}mie~O. Lumbroso}.}
  \bibinfo{year}{2015}\natexlab{}.
\newblock \showarticletitle{MergeShuffle: {A} Very Fast, Parallel Random
  Permutation Algorithm}.
\newblock \bibinfo{journal}{\emph{CoRR}}  \bibinfo{volume}{abs/1508.03167}
  (\bibinfo{year}{2015}).
\newblock
\showeprint[arxiv]{1508.03167}
\urldef\tempurl%
\url{http://arxiv.org/abs/1508.03167}
\showURL{%
\tempurl}


\bibitem[\protect\citeauthoryear{Bacher, Bodini, Hwang, and Tsai}{Bacher
  et~al\mbox{.}}{2017}]%
        {CoinFlippingAnalysis}
\bibfield{author}{\bibinfo{person}{Axel Bacher}, \bibinfo{person}{Olivier
  Bodini}, \bibinfo{person}{Hsien-Kuei Hwang}, {and} \bibinfo{person}{Tsung-Hsi
  Tsai}.} \bibinfo{year}{2017}\natexlab{}.
\newblock \showarticletitle{Generating Random Permutations by Coin Tossing:
  Classical Algorithms, New Analysis, and Modern Implementation}.
\newblock \bibinfo{journal}{\emph{ACM Trans. Algorithms}} \bibinfo{volume}{13},
  \bibinfo{number}{2}, Article \bibinfo{articleno}{24} (\bibinfo{date}{Feb.}
  \bibinfo{year}{2017}), \bibinfo{numpages}{43}~pages.
\newblock
\showISSN{1549-6325}


\bibitem[\protect\citeauthoryear{Bell and Hoberock}{Bell and Hoberock}{2012}]%
        {bell2012thrust}
\bibfield{author}{\bibinfo{person}{Nathan Bell} {and} \bibinfo{person}{Jared
  Hoberock}.} \bibinfo{year}{2012}\natexlab{}.
\newblock \showarticletitle{Thrust: A productivity-oriented library for CUDA}.
\newblock In \bibinfo{booktitle}{\emph{GPU computing gems Jade edition}}.
  \bibinfo{publisher}{Elsevier}, \bibinfo{pages}{359--371}.
\newblock


\bibitem[\protect\citeauthoryear{Biryukov and De~Canni{\`e}re}{Biryukov and
  De~Canni{\`e}re}{2005}]%
        {Biryukov2005}
\bibfield{author}{\bibinfo{person}{Alex Biryukov} {and}
  \bibinfo{person}{Christophe De~Canni{\`e}re}.}
  \bibinfo{year}{2005}\natexlab{}.
\newblock \bibinfo{booktitle}{\emph{Data encryption standard (DES)}}.
\newblock \bibinfo{publisher}{Springer US}, \bibinfo{address}{Boston, MA},
  \bibinfo{pages}{129--135}.
\newblock
\showISBNx{978-0-387-23483-0}


\bibitem[\protect\citeauthoryear{Blelloch}{Blelloch}{1990}]%
        {blelloch}
\bibfield{author}{\bibinfo{person}{Guy~E. Blelloch}.}
  \bibinfo{year}{1990}\natexlab{}.
\newblock \bibinfo{booktitle}{\emph{Prefix Sums and Their Applications}}.
\newblock \bibinfo{type}{{T}echnical {R}eport} CMU-CS-90-190.
  \bibinfo{institution}{School of Computer Science, Carnegie Mellon
  University}.
\newblock


\bibitem[\protect\citeauthoryear{{Bo Fang}, {Guobin Shen}, {Shipeng Li}, and
  {Huifang Chen}}{{Bo Fang} et~al\mbox{.}}{2005}]%
        {efficient_dct}
\bibfield{author}{\bibinfo{person}{{Bo Fang}}, \bibinfo{person}{{Guobin Shen}},
  \bibinfo{person}{{Shipeng Li}}, {and} \bibinfo{person}{{Huifang Chen}}.}
  \bibinfo{year}{2005}\natexlab{}.
\newblock \showarticletitle{Techniques for efficient DCT/IDCT implementation on
  generic GPU}. In \bibinfo{booktitle}{\emph{2005 IEEE International Symposium
  on Circuits and Systems}}. \bibinfo{pages}{1126--1129 Vol. 2}.
\newblock


\bibitem[\protect\citeauthoryear{{Changhao Jiang} and {Snir}}{{Changhao Jiang}
  and {Snir}}{2005}]%
        {tuning_matrix_multiplication}
\bibfield{author}{\bibinfo{person}{{Changhao Jiang}} {and} \bibinfo{person}{M.
  {Snir}}.} \bibinfo{year}{2005}\natexlab{}.
\newblock \showarticletitle{Automatic tuning matrix multiplication performance
  on graphics hardware}. In \bibinfo{booktitle}{\emph{14th International
  Conference on Parallel Architectures and Compilation Techniques (PACT'05)}}.
  \bibinfo{pages}{185--194}.
\newblock


\bibitem[\protect\citeauthoryear{Chetlur, Woolley, Vandermersch, Cohen, Tran,
  Catanzaro, and Shelhamer}{Chetlur et~al\mbox{.}}{2014}]%
        {chetlur2014cudnn}
\bibfield{author}{\bibinfo{person}{Sharan Chetlur}, \bibinfo{person}{Cliff
  Woolley}, \bibinfo{person}{Philippe Vandermersch}, \bibinfo{person}{Jonathan
  Cohen}, \bibinfo{person}{John Tran}, \bibinfo{person}{Bryan Catanzaro}, {and}
  \bibinfo{person}{Evan Shelhamer}.} \bibinfo{year}{2014}\natexlab{}.
\newblock \bibinfo{title}{cuDNN: Efficient Primitives for Deep Learning}.
\newblock
\newblock
\showeprint[arxiv]{1410.0759}~[cs.NE]


\bibitem[\protect\citeauthoryear{Coates, Huval, Wang, Wu, Ng, and
  Catanzaro}{Coates et~al\mbox{.}}{2013}]%
        {deep_learning}
\bibfield{author}{\bibinfo{person}{Adam Coates}, \bibinfo{person}{Brody Huval},
  \bibinfo{person}{Tao Wang}, \bibinfo{person}{David~J. Wu},
  \bibinfo{person}{Andrew~Y. Ng}, {and} \bibinfo{person}{Bryan Catanzaro}.}
  \bibinfo{year}{2013}\natexlab{}.
\newblock \showarticletitle{Deep Learning with COTS HPC Systems}. In
  \bibinfo{booktitle}{\emph{Proceedings of the 30th International Conference on
  International Conference on Machine Learning - Volume 28}} (Atlanta, GA, USA)
  \emph{(\bibinfo{series}{ICML'13})}. \bibinfo{publisher}{JMLR.org},
  \bibinfo{pages}{III–1337–III–1345}.
\newblock


\bibitem[\protect\citeauthoryear{Cong and Bader}{Cong and Bader}{2005}]%
        {AlgoComparison}
\bibfield{author}{\bibinfo{person}{Guojing Cong} {and}
  \bibinfo{person}{David~A. Bader}.} \bibinfo{year}{2005}\natexlab{}.
\newblock \showarticletitle{An Empirical Analysis of Parallel Random
  Permutation Algorithms {ON} SMPs}. In \bibinfo{booktitle}{\emph{Proceedings
  of the {ISCA} 18th International Conference on Parallel and Distributed
  Computing Systems, September 12-14, 2005 Imperial Palace Hotel, Las Vegas,
  Nevada, {USA}}}, \bibfield{editor}{\bibinfo{person}{Michael~J. Oudshoorn}
  {and} \bibinfo{person}{Sanguthevar Rajasekaran}} (Eds.).
  \bibinfo{publisher}{{ISCA}}, \bibinfo{pages}{27--34}.
\newblock


\bibitem[\protect\citeauthoryear{Czumaj, Kanarek, Kutylowski, and Lorys}{Czumaj
  et~al\mbox{.}}{1998}]%
        {Czumaj1998}
\bibfield{author}{\bibinfo{person}{A. Czumaj}, \bibinfo{person}{P. Kanarek},
  \bibinfo{person}{M. Kutylowski}, {and} \bibinfo{person}{K. Lorys}.}
  \bibinfo{year}{1998}\natexlab{}.
\newblock \showarticletitle{Fast Generation of Random Permutations Via Networks
  Simulation}.
\newblock \bibinfo{journal}{\emph{Algorithmica}} \bibinfo{volume}{21},
  \bibinfo{number}{1} (\bibinfo{year}{1998}), \bibinfo{pages}{2--20}.
\newblock
\showISSN{1432-0541}


\bibitem[\protect\citeauthoryear{Diaconis}{Diaconis}{1988}]%
        {diaconis1988group}
\bibfield{author}{\bibinfo{person}{Persi Diaconis}.}
  \bibinfo{year}{1988}\natexlab{}.
\newblock \bibinfo{booktitle}{\emph{Group representations in probability and
  statistics}}.
\newblock \bibinfo{publisher}{Institute of Mathematical Statistics},
  \bibinfo{address}{Hayward, CA}. vi+198 pages.
\newblock
\showISBNx{0-940600-14-5}


\bibitem[\protect\citeauthoryear{Farber}{Farber}{2011}]%
        {farber2011cuda}
\bibfield{author}{\bibinfo{person}{Rob Farber}.}
  \bibinfo{year}{2011}\natexlab{}.
\newblock \bibinfo{booktitle}{\emph{CUDA application design and development}}.
\newblock \bibinfo{publisher}{Elsevier}.
\newblock


\bibitem[\protect\citeauthoryear{Fatahalian, Sugerman, and Hanrahan}{Fatahalian
  et~al\mbox{.}}{2004}]%
        {matrix_multiplication}
\bibfield{author}{\bibinfo{person}{K. Fatahalian}, \bibinfo{person}{J.
  Sugerman}, {and} \bibinfo{person}{P. Hanrahan}.}
  \bibinfo{year}{2004}\natexlab{}.
\newblock \showarticletitle{Understanding the Efficiency of GPU Algorithms for
  Matrix-Matrix Multiplication}. In \bibinfo{booktitle}{\emph{Proceedings of
  the ACM SIGGRAPH/EUROGRAPHICS Conference on Graphics Hardware}} (Grenoble,
  France) \emph{(\bibinfo{series}{HWWS '04})}. \bibinfo{publisher}{Association
  for Computing Machinery}, \bibinfo{address}{New York, NY, USA},
  \bibinfo{pages}{133–137}.
\newblock
\showISBNx{3905673150}


\bibitem[\protect\citeauthoryear{Feistel}{Feistel}{1973}]%
        {feistel1973}
\bibfield{author}{\bibinfo{person}{Horst Feistel}.}
  \bibinfo{year}{1973}\natexlab{}.
\newblock \showarticletitle{Cryptography and Computer Privacy}.
\newblock \bibinfo{journal}{\emph{Scientific American}} \bibinfo{volume}{228},
  \bibinfo{number}{5} (\bibinfo{year}{1973}), \bibinfo{pages}{15--23}.
\newblock
\showISSN{00368733, 19467087}


\bibitem[\protect\citeauthoryear{Fisher and Yates}{Fisher and Yates}{1943}]%
        {fisher1943statistical}
\bibfield{author}{\bibinfo{person}{Ronald~A Fisher} {and}
  \bibinfo{person}{Frank Yates}.} \bibinfo{year}{1943}\natexlab{}.
\newblock \bibinfo{booktitle}{\emph{Statistical tables for biological,
  agricultural and medical research}}.
\newblock \bibinfo{publisher}{Oliver and Boyd Ltd, London}.
\newblock


\bibitem[\protect\citeauthoryear{Fukumizu, Bach, and Jordan}{Fukumizu
  et~al\mbox{.}}{2004}]%
        {injective}
\bibfield{author}{\bibinfo{person}{Kenji Fukumizu}, \bibinfo{person}{Francis~R.
  Bach}, {and} \bibinfo{person}{Michael~I. Jordan}.}
  \bibinfo{year}{2004}\natexlab{}.
\newblock \showarticletitle{Dimensionality Reduction for Supervised Learning
  with Reproducing Kernel Hilbert Spaces}.
\newblock \bibinfo{journal}{\emph{J. Mach. Learn. Res.}}  \bibinfo{volume}{5}
  (\bibinfo{date}{Dec.} \bibinfo{year}{2004}), \bibinfo{pages}{73–99}.
\newblock
\showISSN{1532-4435}


\bibitem[\protect\citeauthoryear{Good}{Good}{2006}]%
        {good2006permutation}
\bibfield{author}{\bibinfo{person}{Phillip~I Good}.}
  \bibinfo{year}{2006}\natexlab{}.
\newblock \bibinfo{booktitle}{\emph{Permutation, parametric, and bootstrap
  tests of hypotheses}}.
\newblock \bibinfo{publisher}{Springer Science \& Business Media}.
\newblock


\bibitem[\protect\citeauthoryear{Granboulan and Pornin}{Granboulan and
  Pornin}{2007}]%
        {BlockCipherCzumaj}
\bibfield{author}{\bibinfo{person}{Louis Granboulan} {and}
  \bibinfo{person}{Thomas Pornin}.} \bibinfo{year}{2007}\natexlab{}.
\newblock \showarticletitle{Perfect Block Ciphers with Small Blocks}. In
  \bibinfo{booktitle}{\emph{Fast Software Encryption}},
  \bibfield{editor}{\bibinfo{person}{Alex Biryukov}} (Ed.).
  \bibinfo{publisher}{Springer Berlin Heidelberg}, \bibinfo{address}{Berlin,
  Heidelberg}, \bibinfo{pages}{452--465}.
\newblock
\showISBNx{978-3-540-74619-5}


\bibitem[\protect\citeauthoryear{Green, McColl, and Bader}{Green
  et~al\mbox{.}}{2012}]%
        {merge_path}
\bibfield{author}{\bibinfo{person}{Oded Green}, \bibinfo{person}{Robert
  McColl}, {and} \bibinfo{person}{David~A. Bader}.}
  \bibinfo{year}{2012}\natexlab{}.
\newblock \showarticletitle{GPU Merge Path: A GPU Merging Algorithm}. In
  \bibinfo{booktitle}{\emph{Proceedings of the 26th ACM International
  Conference on Supercomputing}} (San Servolo Island, Venice, Italy)
  \emph{(\bibinfo{series}{ICS '12})}. \bibinfo{publisher}{Association for
  Computing Machinery}, \bibinfo{address}{New York, NY, USA},
  \bibinfo{pages}{331–340}.
\newblock
\showISBNx{9781450313162}


\bibitem[\protect\citeauthoryear{Gretton, Borgwardt, Rasch, Sch\"{o}lkopf, and
  Smola}{Gretton et~al\mbox{.}}{2012}]%
        {two_sample}
\bibfield{author}{\bibinfo{person}{Arthur Gretton}, \bibinfo{person}{Karsten~M.
  Borgwardt}, \bibinfo{person}{Malte~J. Rasch}, \bibinfo{person}{Bernhard
  Sch\"{o}lkopf}, {and} \bibinfo{person}{Alexander Smola}.}
  \bibinfo{year}{2012}\natexlab{}.
\newblock \showarticletitle{A Kernel Two-Sample Test}.
\newblock \bibinfo{journal}{\emph{J. Mach. Learn. Res.}}  \bibinfo{volume}{13}
  (\bibinfo{date}{March} \bibinfo{year}{2012}), \bibinfo{pages}{723–773}.
\newblock
\showISSN{1532-4435}


\bibitem[\protect\citeauthoryear{Gustedt}{Gustedt}{2003}]%
        {Gustedt2003}
\bibfield{author}{\bibinfo{person}{Jens Gustedt}.}
  \bibinfo{year}{2003}\natexlab{}.
\newblock \showarticletitle{Randomized permutations in a coarse grained
  parallel environment}. In \bibinfo{booktitle}{\emph{Proceedings of the 6th
  European Conference on Computer Systems}}.
\newblock


\bibitem[\protect\citeauthoryear{Hagerup}{Hagerup}{1991}]%
        {DartThrowingMinPrefix}
\bibfield{author}{\bibinfo{person}{Torben Hagerup}.}
  \bibinfo{year}{1991}\natexlab{}.
\newblock \showarticletitle{Fast parallel generation of random permutations}.
  In \bibinfo{booktitle}{\emph{Automata, Languages and Programming}},
  \bibfield{editor}{\bibinfo{person}{Javier~Leach Albert},
  \bibinfo{person}{Burkhard Monien}, {and} \bibinfo{person}{Mario~Rodr{\'i}guez
  Artalejo}} (Eds.). \bibinfo{publisher}{Springer Berlin Heidelberg},
  \bibinfo{address}{Berlin, Heidelberg}, \bibinfo{pages}{405--416}.
\newblock
\showISBNx{978-3-540-47516-3}


\bibitem[\protect\citeauthoryear{Hall, Carr, and Hart}{Hall
  et~al\mbox{.}}{2003}]%
        {hall2003cache}
\bibfield{author}{\bibinfo{person}{Jesse~D Hall}, \bibinfo{person}{Nathan~A
  Carr}, {and} \bibinfo{person}{John~C Hart}.} \bibinfo{year}{2003}\natexlab{}.
\newblock \showarticletitle{Cache and bandwidth aware matrix multiplication on
  the GPU}.
\newblock  (\bibinfo{year}{2003}).
\newblock


\bibitem[\protect\citeauthoryear{Harris, Sengupta, and Owens}{Harris
  et~al\mbox{.}}{2007}]%
        {harris2007parallel}
\bibfield{author}{\bibinfo{person}{Mark Harris}, \bibinfo{person}{Shubhabrata
  Sengupta}, {and} \bibinfo{person}{John~D Owens}.}
  \bibinfo{year}{2007}\natexlab{}.
\newblock \showarticletitle{Parallel prefix sum (scan) with CUDA}.
\newblock \bibinfo{journal}{\emph{GPU gems}} \bibinfo{volume}{3},
  \bibinfo{number}{39} (\bibinfo{year}{2007}), \bibinfo{pages}{851--876}.
\newblock


\bibitem[\protect\citeauthoryear{He, Govindaraju, Luo, and Smith}{He
  et~al\mbox{.}}{2007}]%
        {he2007efficient}
\bibfield{author}{\bibinfo{person}{Bingsheng He}, \bibinfo{person}{Naga~K
  Govindaraju}, \bibinfo{person}{Qiong Luo}, {and} \bibinfo{person}{Burton
  Smith}.} \bibinfo{year}{2007}\natexlab{}.
\newblock \showarticletitle{Efficient gather and scatter operations on graphics
  processors}. In \bibinfo{booktitle}{\emph{Proceedings of the 2007 ACM/IEEE
  conference on Supercomputing}}. \bibinfo{pages}{1--12}.
\newblock


\bibitem[\protect\citeauthoryear{Hoeffding}{Hoeffding}{1963}]%
        {inequalities}
\bibfield{author}{\bibinfo{person}{Wassily Hoeffding}.}
  \bibinfo{year}{1963}\natexlab{}.
\newblock \showarticletitle{Probability Inequalities for Sums of Bounded Random
  Variables}.
\newblock \bibinfo{journal}{\emph{J. Amer. Statist. Assoc.}}
  \bibinfo{volume}{58}, \bibinfo{number}{301} (\bibinfo{year}{1963}),
  \bibinfo{pages}{13--30}.
\newblock
\showISSN{01621459}


\bibitem[\protect\citeauthoryear{J{\'e}J{\'e}}{J{\'e}J{\'e}}{1992}]%
        {jeje1992introduction}
\bibfield{author}{\bibinfo{person}{Joseph J{\'e}J{\'e}}.}
  \bibinfo{year}{1992}\natexlab{}.
\newblock \bibinfo{booktitle}{\emph{An introduction to parallel algorithms}}.
\newblock \bibinfo{publisher}{Reading, MA: Addison-Wesley}.
\newblock


\bibitem[\protect\citeauthoryear{Jiao and Vert}{Jiao and Vert}{2015}]%
        {permutation_kernels}
\bibfield{author}{\bibinfo{person}{Yunlong Jiao} {and}
  \bibinfo{person}{Jean-Philippe Vert}.} \bibinfo{year}{2015}\natexlab{}.
\newblock \showarticletitle{The Kendall and Mallows kernels for permutations}.
  In \bibinfo{booktitle}{\emph{International Conference on Machine Learning}}.
  PMLR, \bibinfo{pages}{1935--1944}.
\newblock


\bibitem[\protect\citeauthoryear{Knight}{Knight}{1966}]%
        {kt_complexity}
\bibfield{author}{\bibinfo{person}{William~R. Knight}.}
  \bibinfo{year}{1966}\natexlab{}.
\newblock \showarticletitle{A Computer Method for Calculating Kendall's Tau
  with Ungrouped Data}.
\newblock \bibinfo{journal}{\emph{J. Amer. Statist. Assoc.}}
  \bibinfo{volume}{61}, \bibinfo{number}{314} (\bibinfo{year}{1966}),
  \bibinfo{pages}{436--439}.
\newblock
\showISSN{01621459}


\bibitem[\protect\citeauthoryear{Knuth}{Knuth}{1997}]%
        {Knuth:1997:ACP:270146}
\bibfield{author}{\bibinfo{person}{Donald~E. Knuth}.}
  \bibinfo{year}{1997}\natexlab{}.
\newblock \bibinfo{booktitle}{\emph{The Art of Computer Programming, Volume 2
  (3rd Ed.): Seminumerical Algorithms}}.
\newblock \bibinfo{publisher}{Addison-Wesley Longman Publishing Co., Inc.},
  \bibinfo{address}{Boston, MA, USA}.
\newblock
\showISBNx{0-201-89684-2}


\bibitem[\protect\citeauthoryear{Langr, Tvrdik, Dytrych, and Draayer}{Langr
  et~al\mbox{.}}{2014}]%
        {MPIAlgorithm}
\bibfield{author}{\bibinfo{person}{Daniel Langr}, \bibinfo{person}{Pavel
  Tvrdik}, \bibinfo{person}{Tomas Dytrych}, {and} \bibinfo{person}{J.
  Draayer}.} \bibinfo{year}{2014}\natexlab{}.
\newblock \showarticletitle{Algorithm 947: Paraperm-Parallel Generation of
  Random Permutations with MPI}.
\newblock \bibinfo{journal}{\emph{ACM Trans. Math. Software}}
  \bibinfo{volume}{41} (\bibinfo{date}{10} \bibinfo{year}{2014}),
  \bibinfo{pages}{5:1--5:26}.
\newblock


\bibitem[\protect\citeauthoryear{L'Ecuyer and Simard}{L'Ecuyer and
  Simard}{2007}]%
        {l2007testu01}
\bibfield{author}{\bibinfo{person}{Pierre L'Ecuyer} {and}
  \bibinfo{person}{Richard Simard}.} \bibinfo{year}{2007}\natexlab{}.
\newblock \showarticletitle{TestU01: AC library for empirical testing of random
  number generators}.
\newblock \bibinfo{journal}{\emph{ACM Transactions on Mathematical Software
  (TOMS)}} \bibinfo{volume}{33}, \bibinfo{number}{4} (\bibinfo{year}{2007}),
  \bibinfo{pages}{1--40}.
\newblock


\bibitem[\protect\citeauthoryear{Luby and Rackoff}{Luby and Rackoff}{1988}]%
        {LubyRackoffPRP}
\bibfield{author}{\bibinfo{person}{Michael Luby} {and} \bibinfo{person}{Charles
  Rackoff}.} \bibinfo{year}{1988}\natexlab{}.
\newblock \showarticletitle{How to Construct Pseudorandom Permutations from
  Pseudorandom Functions}.
\newblock \bibinfo{journal}{\emph{SIAM J. Comput.}} \bibinfo{volume}{17},
  \bibinfo{number}{2} (\bibinfo{year}{1988}).
\newblock


\bibitem[\protect\citeauthoryear{Mania, Ramdas, Wainwright, Jordan, Recht,
  et~al\mbox{.}}{Mania et~al\mbox{.}}{2018}]%
        {mania2018kernel}
\bibfield{author}{\bibinfo{person}{Horia Mania}, \bibinfo{person}{Aaditya
  Ramdas}, \bibinfo{person}{Martin~J Wainwright}, \bibinfo{person}{Michael~I
  Jordan}, \bibinfo{person}{Benjamin Recht}, {et~al\mbox{.}}}
  \bibinfo{year}{2018}\natexlab{}.
\newblock \showarticletitle{On kernel methods for covariates that are
  rankings}.
\newblock \bibinfo{journal}{\emph{Electronic Journal of Statistics}}
  \bibinfo{volume}{12}, \bibinfo{number}{2} (\bibinfo{year}{2018}),
  \bibinfo{pages}{2537--2577}.
\newblock


\bibitem[\protect\citeauthoryear{Matias and Vishkin}{Matias and
  Vishkin}{1991}]%
        {DartThrowingCanonical}
\bibfield{author}{\bibinfo{person}{Yossi Matias} {and} \bibinfo{person}{Uzi
  Vishkin}.} \bibinfo{year}{1991}\natexlab{}.
\newblock \showarticletitle{Converting high probability into nearly-constant
  time - With applications to parallel hashing}.
\newblock \bibinfo{journal}{\emph{Proc. 23rd Ann. ACM Symp. on Theory of
  Computing}} (\bibinfo{date}{01} \bibinfo{year}{1991}),
  \bibinfo{pages}{307--316}.
\newblock


\bibitem[\protect\citeauthoryear{Matsumoto and Nishimura}{Matsumoto and
  Nishimura}{1998}]%
        {mersenne}
\bibfield{author}{\bibinfo{person}{Makoto Matsumoto} {and}
  \bibinfo{person}{Takuji Nishimura}.} \bibinfo{year}{1998}\natexlab{}.
\newblock \showarticletitle{Mersenne Twister: A 623-Dimensionally
  Equidistributed Uniform Pseudo-Random Number Generator}.
\newblock \bibinfo{journal}{\emph{ACM Trans. Model. Comput. Simul.}}
  \bibinfo{volume}{8}, \bibinfo{number}{1} (\bibinfo{date}{Jan.}
  \bibinfo{year}{1998}), \bibinfo{pages}{3–30}.
\newblock
\showISSN{1049-3301}


\bibitem[\protect\citeauthoryear{Merrill}{Merrill}{2015}]%
        {merrill2015cub}
\bibfield{author}{\bibinfo{person}{Duane Merrill}.}
  \bibinfo{year}{2015}\natexlab{}.
\newblock \showarticletitle{Cub}.
\newblock \bibinfo{journal}{\emph{NVIDIA Research}} (\bibinfo{year}{2015}).
\newblock


\bibitem[\protect\citeauthoryear{Merrill and Garland}{Merrill and
  Garland}{2016}]%
        {merrill2016single}
\bibfield{author}{\bibinfo{person}{Duane Merrill} {and}
  \bibinfo{person}{Michael Garland}.} \bibinfo{year}{2016}\natexlab{}.
\newblock \showarticletitle{Single-pass parallel prefix scan with decoupled
  look-back}.
\newblock \bibinfo{journal}{\emph{NVIDIA, Tech. Rep. NVR-2016-002}}
  (\bibinfo{year}{2016}).
\newblock


\bibitem[\protect\citeauthoryear{{Miller} and {Reif}}{{Miller} and
  {Reif}}{1985}]%
        {DartThrowingTree}
\bibfield{author}{\bibinfo{person}{G.~L. {Miller}} {and} \bibinfo{person}{J.~H.
  {Reif}}.} \bibinfo{year}{1985}\natexlab{}.
\newblock \showarticletitle{Parallel tree contraction and its application}. In
  \bibinfo{booktitle}{\emph{26th Annual Symposium on Foundations of Computer
  Science (sfcs 1985)}}. \bibinfo{pages}{478--489}.
\newblock
\showISSN{0272-5428}


\bibitem[\protect\citeauthoryear{Mitchell, Cooper, Frank, and Holmes}{Mitchell
  et~al\mbox{.}}{2021}]%
        {mitchell2021sampling}
\bibfield{author}{\bibinfo{person}{Rory Mitchell}, \bibinfo{person}{Joshua
  Cooper}, \bibinfo{person}{Eibe Frank}, {and} \bibinfo{person}{Geoffrey
  Holmes}.} \bibinfo{year}{2021}\natexlab{}.
\newblock \showarticletitle{Sampling Permutations for Shapley Value
  Estimation}.
\newblock \bibinfo{journal}{\emph{arXiv preprint arXiv:2104.12199}}
  (\bibinfo{year}{2021}).
\newblock


\bibitem[\protect\citeauthoryear{Mitchell and Frank}{Mitchell and
  Frank}{2017}]%
        {mitchell2017accelerating}
\bibfield{author}{\bibinfo{person}{Rory Mitchell} {and} \bibinfo{person}{Eibe
  Frank}.} \bibinfo{year}{2017}\natexlab{}.
\newblock \showarticletitle{Accelerating the XGBoost algorithm using GPU
  computing}.
\newblock \bibinfo{journal}{\emph{PeerJ Computer Science}}  \bibinfo{volume}{3}
  (\bibinfo{year}{2017}), \bibinfo{pages}{e127}.
\newblock


\bibitem[\protect\citeauthoryear{Moreland and Angel}{Moreland and
  Angel}{2003}]%
        {fft_gpu}
\bibfield{author}{\bibinfo{person}{Kenneth Moreland} {and}
  \bibinfo{person}{Edward Angel}.} \bibinfo{year}{2003}\natexlab{}.
\newblock \showarticletitle{The FFT on a GPU}. In
  \bibinfo{booktitle}{\emph{Proceedings of the ACM SIGGRAPH/EUROGRAPHICS
  Conference on Graphics Hardware}} (San Diego, California)
  \emph{(\bibinfo{series}{HWWS '03})}. \bibinfo{publisher}{Eurographics
  Association}, \bibinfo{address}{Goslar, DEU}, \bibinfo{pages}{112–119}.
\newblock
\showISBNx{1581137397}


\bibitem[\protect\citeauthoryear{{NVIDIA Corporation}}{{NVIDIA
  Corporation}}{2020}]%
        {cuda}
\bibfield{author}{\bibinfo{person}{{NVIDIA Corporation}}.}
  \bibinfo{year}{2020}\natexlab{}.
\newblock \bibinfo{title}{{CUDA C++} Programming Guide}.
\newblock
\newblock
\newblock
\shownote{Version 11.1.}


\bibitem[\protect\citeauthoryear{Perry, Prosper, and Meyer-Baese}{Perry
  et~al\mbox{.}}{2014}]%
        {Perry_2014}
\bibfield{author}{\bibinfo{person}{Michelle Perry}, \bibinfo{person}{Harrison~B
  Prosper}, {and} \bibinfo{person}{Anke Meyer-Baese}.}
  \bibinfo{year}{2014}\natexlab{}.
\newblock \showarticletitle{{GPU} Implementation of Bayesian Neural Network
  Construction for Data-Intensive Applications}.
\newblock \bibinfo{journal}{\emph{Journal of Physics: Conference Series}}
  \bibinfo{volume}{513}, \bibinfo{number}{2} (\bibinfo{date}{jun}
  \bibinfo{year}{2014}), \bibinfo{pages}{022027}.
\newblock


\bibitem[\protect\citeauthoryear{Peters, Schulz-Hildebrandt, and
  Luttenberger}{Peters et~al\mbox{.}}{2009}]%
        {peters2009fast}
\bibfield{author}{\bibinfo{person}{Hagen Peters}, \bibinfo{person}{Ole
  Schulz-Hildebrandt}, {and} \bibinfo{person}{Norbert Luttenberger}.}
  \bibinfo{year}{2009}\natexlab{}.
\newblock \showarticletitle{Fast in-place sorting with cuda based on bitonic
  sort}. In \bibinfo{booktitle}{\emph{International Conference on Parallel
  Processing and Applied Mathematics}}. Springer, \bibinfo{pages}{403--410}.
\newblock


\bibitem[\protect\citeauthoryear{Prediger, Loppi, Kaski, and Honkela}{Prediger
  et~al\mbox{.}}{2021}]%
        {prediger2021d3p}
\bibfield{author}{\bibinfo{person}{Lukas Prediger}, \bibinfo{person}{Niki
  Loppi}, \bibinfo{person}{Samuel Kaski}, {and} \bibinfo{person}{Antti
  Honkela}.} \bibinfo{year}{2021}\natexlab{}.
\newblock \showarticletitle{d3p--A Python Package for Differentially-Private
  Probabilistic Programming}.
\newblock \bibinfo{journal}{\emph{arXiv preprint arXiv:2103.11648}}
  (\bibinfo{year}{2021}).
\newblock


\bibitem[\protect\citeauthoryear{Press, Teukolsky, Vetterling, and
  Flannery}{Press et~al\mbox{.}}{2007}]%
        {Press2007}
\bibfield{author}{\bibinfo{person}{William~H. Press}, \bibinfo{person}{Saul~A.
  Teukolsky}, \bibinfo{person}{William~T. Vetterling}, {and}
  \bibinfo{person}{Brian~P. Flannery}.} \bibinfo{year}{2007}\natexlab{}.
\newblock \bibinfo{booktitle}{\emph{Numerical Recipes 3rd Edition: The Art of
  Scientific Computing} (\bibinfo{edition}{3} ed.)}.
\newblock \bibinfo{publisher}{Cambridge University Press},
  \bibinfo{address}{New York, NY, USA}.
\newblock
\showISBNx{0521880688, 9780521880688}


\bibitem[\protect\citeauthoryear{Rajasekaran and Reif}{Rajasekaran and
  Reif}{1989}]%
        {DartThrowingPrefix}
\bibfield{author}{\bibinfo{person}{Sanguthevar Rajasekaran} {and}
  \bibinfo{person}{John~H. Reif}.} \bibinfo{year}{1989}\natexlab{}.
\newblock \showarticletitle{Optimal and Sublogarithmic Time Randomized Parallel
  Sorting Algorithms}.
\newblock \bibinfo{journal}{\emph{SIAM J. Comput.}} \bibinfo{volume}{18},
  \bibinfo{number}{3} (\bibinfo{date}{06} \bibinfo{year}{1989}),
  \bibinfo{pages}{594--14}.
\newblock
\showISBNx{00975397}
\newblock
\shownote{Copyright - Copyright] © 1989 © Society for Industrial and Applied
  Mathematics; Last updated - 2012-07-02.}


\bibitem[\protect\citeauthoryear{Rao}{Rao}{1961}]%
        {rao}
\bibfield{author}{\bibinfo{person}{C.~Radhakrishna Rao}.}
  \bibinfo{year}{1961}\natexlab{}.
\newblock \showarticletitle{Generation of Random Permutations of Given Number
  of Elements Using Random Sampling Numbers}.
\newblock \bibinfo{journal}{\emph{Sankhyā: The Indian Journal of Statistics,
  Series A (1961-2002)}} \bibinfo{volume}{23}, \bibinfo{number}{3}
  (\bibinfo{year}{1961}), \bibinfo{pages}{305--307}.
\newblock
\showISSN{0581572X}


\bibitem[\protect\citeauthoryear{{Reif}}{{Reif}}{1985}]%
        {Reif}
\bibfield{author}{\bibinfo{person}{J.~H. {Reif}}.}
  \bibinfo{year}{1985}\natexlab{}.
\newblock \showarticletitle{An optimal parallel algorithm for integer sorting}.
  In \bibinfo{booktitle}{\emph{26th Annual Symposium on Foundations of Computer
  Science (sfcs 1985)}}. \bibinfo{pages}{496--504}.
\newblock
\showISSN{0272-5428}


\bibitem[\protect\citeauthoryear{Reinders}{Reinders}{2007}]%
        {tbb}
\bibfield{author}{\bibinfo{person}{James Reinders}.}
  \bibinfo{year}{2007}\natexlab{}.
\newblock \bibinfo{booktitle}{\emph{Intel threading building blocks: outfitting
  C++ for multi-core processor parallelism}}.
\newblock \bibinfo{publisher}{" O'Reilly Media, Inc."}.
\newblock


\bibitem[\protect\citeauthoryear{{Salmon}, {Moraes}, {Dror}, and
  {Shaw}}{{Salmon} et~al\mbox{.}}{2011}]%
        {salmon}
\bibfield{author}{\bibinfo{person}{J.~K. {Salmon}}, \bibinfo{person}{M.~A.
  {Moraes}}, \bibinfo{person}{R.~O. {Dror}}, {and} \bibinfo{person}{D.~E.
  {Shaw}}.} \bibinfo{year}{2011}\natexlab{}.
\newblock \showarticletitle{Parallel random numbers: As easy as 1, 2, 3}. In
  \bibinfo{booktitle}{\emph{SC '11: Proceedings of 2011 International
  Conference for High Performance Computing, Networking, Storage and
  Analysis}}. \bibinfo{pages}{1--12}.
\newblock


\bibitem[\protect\citeauthoryear{Sandelius}{Sandelius}{1962}]%
        {sandelius}
\bibfield{author}{\bibinfo{person}{Martin Sandelius}.}
  \bibinfo{year}{1962}\natexlab{}.
\newblock \showarticletitle{A Simple Randomization Procedure}.
\newblock \bibinfo{journal}{\emph{Journal of the Royal Statistical Society.
  Series B (Methodological)}} \bibinfo{volume}{24}, \bibinfo{number}{2}
  (\bibinfo{year}{1962}), \bibinfo{pages}{472--481}.
\newblock
\showISSN{00359246}


\bibitem[\protect\citeauthoryear{Sanders}{Sanders}{1998}]%
        {Sanders}
\bibfield{author}{\bibinfo{person}{Peter Sanders}.}
  \bibinfo{year}{1998}\natexlab{}.
\newblock \showarticletitle{Random permutations on distributed, external and
  hierarchical memory}.
\newblock \bibinfo{journal}{\emph{Inform. Process. Lett.}}
  \bibinfo{volume}{67}, \bibinfo{number}{6} (\bibinfo{year}{1998}).
\newblock


\bibitem[\protect\citeauthoryear{Satish, Kim, Chhugani, Nguyen, Lee, Kim, and
  Dubey}{Satish et~al\mbox{.}}{2010}]%
        {radix_sort}
\bibfield{author}{\bibinfo{person}{Nadathur Satish}, \bibinfo{person}{Changkyu
  Kim}, \bibinfo{person}{Jatin Chhugani}, \bibinfo{person}{Anthony~D. Nguyen},
  \bibinfo{person}{Victor~W. Lee}, \bibinfo{person}{Daehyun Kim}, {and}
  \bibinfo{person}{Pradeep Dubey}.} \bibinfo{year}{2010}\natexlab{}.
\newblock \showarticletitle{Fast Sort on CPUs and GPUs: A Case for Bandwidth
  Oblivious SIMD Sort}. In \bibinfo{booktitle}{\emph{Proceedings of the 2010
  ACM SIGMOD International Conference on Management of Data}} (Indianapolis,
  Indiana, USA) \emph{(\bibinfo{series}{SIGMOD '10})}.
  \bibinfo{publisher}{Association for Computing Machinery},
  \bibinfo{address}{New York, NY, USA}, \bibinfo{pages}{351–362}.
\newblock
\showISBNx{9781450300322}


\bibitem[\protect\citeauthoryear{Shun, Gu, Blelloch, Fineman, and Gibbons}{Shun
  et~al\mbox{.}}{2015}]%
        {FisherYatesDependance}
\bibfield{author}{\bibinfo{person}{Julian Shun}, \bibinfo{person}{Yan Gu},
  \bibinfo{person}{Guy~E. Blelloch}, \bibinfo{person}{Jeremy~T. Fineman}, {and}
  \bibinfo{person}{Phillip~B. Gibbons}.} \bibinfo{year}{2015}\natexlab{}.
\newblock \showarticletitle{Sequential Random Permutation, List Contraction and
  Tree Contraction Are Highly Parallel}. In
  \bibinfo{booktitle}{\emph{Proceedings of the Twenty-Sixth Annual ACM-SIAM
  Symposium on Discrete Algorithms}} (San Diego, California)
  \emph{(\bibinfo{series}{SODA '15})}. \bibinfo{publisher}{Society for
  Industrial and Applied Mathematics}, \bibinfo{address}{USA},
  \bibinfo{pages}{431–448}.
\newblock


\bibitem[\protect\citeauthoryear{Stroustrup}{Stroustrup}{2013}]%
        {stroustrup2013c++}
\bibfield{author}{\bibinfo{person}{Bjarne Stroustrup}.}
  \bibinfo{year}{2013}\natexlab{}.
\newblock \bibinfo{booktitle}{\emph{The C++ programming language}}.
\newblock \bibinfo{publisher}{Pearson Education}.
\newblock


\bibitem[\protect\citeauthoryear{Szuppe}{Szuppe}{2016}]%
        {boost}
\bibfield{author}{\bibinfo{person}{Jakub Szuppe}.}
  \bibinfo{year}{2016}\natexlab{}.
\newblock \showarticletitle{Boost.Compute: A Parallel Computing Library for C++
  Based on OpenCL}. In \bibinfo{booktitle}{\emph{Proceedings of the 4th
  International Workshop on OpenCL}} (Vienna, Austria)
  \emph{(\bibinfo{series}{IWOCL '16})}. \bibinfo{publisher}{Association for
  Computing Machinery}, \bibinfo{address}{New York, NY, USA}, Article
  \bibinfo{articleno}{15}, \bibinfo{numpages}{39}~pages.
\newblock
\showISBNx{9781450343381}


\bibitem[\protect\citeauthoryear{Team}{Team}{2018}]%
        {rapids}
\bibfield{author}{\bibinfo{person}{RAPIDS~Development Team}.}
  \bibinfo{year}{2018}\natexlab{}.
\newblock \bibinfo{booktitle}{\emph{RAPIDS: Collection of Libraries for End to
  End GPU Data Science}}.
\newblock
\urldef\tempurl%
\url{https://rapids.ai}
\showURL{%
\tempurl}


\end{thebibliography}

\appendix
\section{Runtime of GPU and CPU shuffling algorithms}
\label{app:runtime}
Figures \ref{fig:gpu_shuffling_runtime} and \ref{fig:cpu_shuffling_runtime} reproduce Figures \ref{fig:gpu_shuffling} and \ref{fig:cpu_shuffling}, reporting runtime in seconds instead of throughput ((millions of keys)/time(s)). 
\begin{figure}
\centering
\begin{minipage}[b]{.49\linewidth}
    \centering
    \includegraphics[width=1\columnwidth]{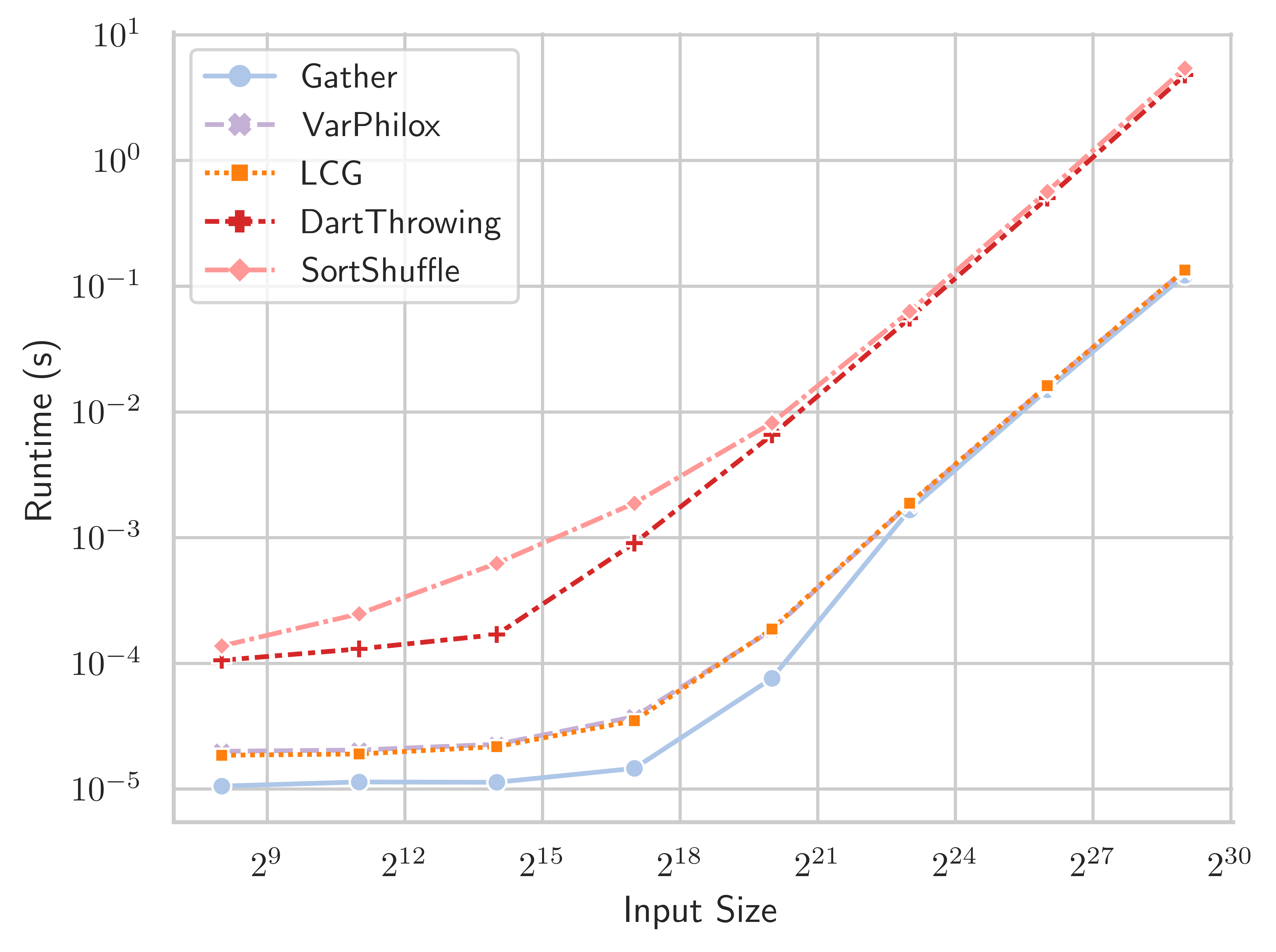}
    \caption{GPU shuffling algorithms runtime - Tesla V100}
    \label{fig:gpu_shuffling_runtime}
\end{minipage}
\begin{minipage}[b]{.49\linewidth}
    \centering
    \includegraphics[width=1\columnwidth]{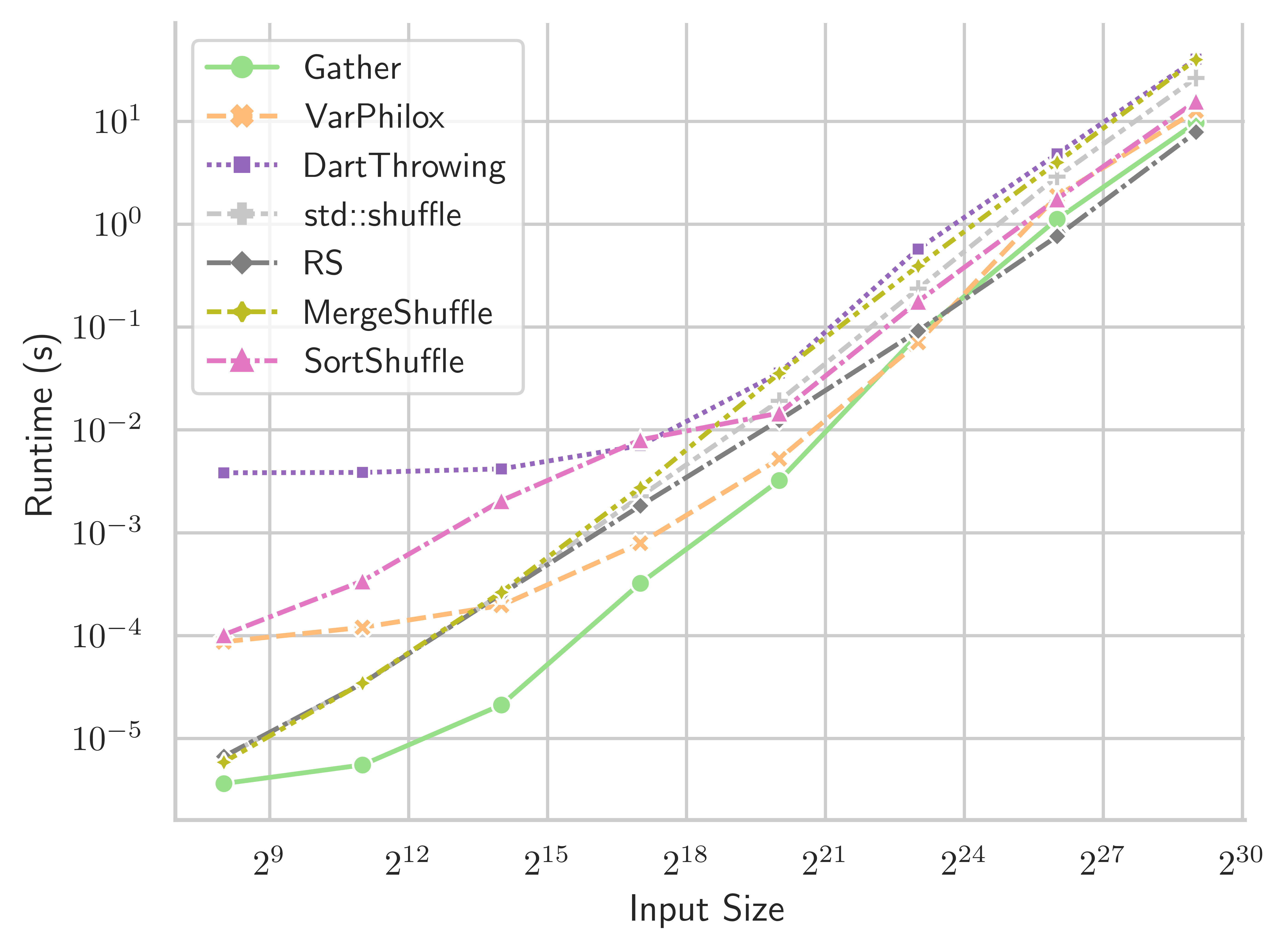}
    \caption{CPU shuffling algorithms runtime - 2x Intel Xeon E5-2698}
    \label{fig:cpu_shuffling_runtime}
\end{minipage}
\end{figure}

\end{document}